\long\def\ignore#1{\relax}
\newcommand\struto[1][15pt]{{\raise #1 \hbox{\strut}}}%
\newcommand\strutb[1][15pt]{{\raise-#1 \hbox{\strut}}}%
\newcommand\midline[1][5pt]{\\[#1]\hline\struto[5pt]}
\def\@boxfigurewith[#1]{\figure[#1]\vbox\bgroup\hrule height.1em}
\def\@boxfigurewithout{\figure\vbox\bgroup\hrule height.1em}
\newenvironment{bfigure}{\@ifnextchar[\@boxfigurewith\@boxfigurewithout}{\vskip.2em\hrule height.1em\egroup\endfigure}
 \newcommand\toaux[1]{\immediate\write\@auxout{#1}}
\newcommand\olditem{}
\newcommand\olditemize{}
\newcommand\oldenditemize{}
\newcommand\oldenumerate{}
\newcommand\oldendenumerate{}
\let\olditem\item
\let\olditemize\itemize
\let\oldenditemize\enditemize
\let\oldenumerate\enumerate
\let\oldendenumerate\endenumerate
\newbox\itemlabbox
\newdimen\itemlabwd
\newdimen\itemizespacing
\newcommand\myitem{}
\def\myitem{%
\@ifnextchar[\@myitemwith\@myitemwithout}
\long\def\@myitemwith[#1]{\hrule height0pt%
  \olditem[]%
  \removelastskip\vskip\itemizespacing%
  \leavevmode%
  \hskip-\leftmargin
  \setbox\itemlabbox=\hbox\bgroup\hfil#1\hfil\egroup%
  \ifdim\wd\itemlabbox>\itemlabwd%
  \box\itemlabbox%
  \else\hbox to\itemlabwd{#1\hfill}\fi%
  \rule[0pt]{0pt}{0pt}}
\long\def\@myitemwithout{\removelastskip\hrule height0pt\olditem\unskip\vskip\itemizespacing\rule[0pt]{0pt}{0pt}}
\renewenvironment{itemize}[1][]{%
  \removelastskip%
  \setbox\itemlabbox=\hbox\bgroup\hfil#1\hfil\egroup%
  \setlength\itemlabwd{\wd\itemlabbox}%
  \def\item{\myitem}%
  \widowpenalty=3000%
  \nobreak%
  \advance\leftmargini-10pt%
  \advance\leftmarginii-5pt%
  \olditemize%
  \unskip%
  \bgroup%
  \setlength\parskip{\itemizespacing}%
  \setlength\topsep{0pt}%
  \setlength\partopsep{0pt}%
  \setlength\parsep{0pt}%
  \setlength\itemsep{0pt}%
  \setlength\itemindent\parindent%
  \setlength\listparindent\parindent%
}
{\removelastskip\hrule height0pt\nobreak\egroup\oldenditemize\unskip\vskip\itemizespacing%
  \def\item{\olditem}%
}
\renewenvironment{enumerate}[1][0]{%
  \removelastskip%
  \setbox\itemlabbox=\hbox\bgroup\hfil#1\hfil\egroup%
  \setlength\itemlabwd{\wd\itemlabbox}%
  \def\item{\myitem}%
  \widowpenalty=3000%
  \nobreak%
  \advance\leftmargini-10pt%
  \advance\leftmarginii-5pt%
  \oldenumerate%
  \unskip%
  \bgroup%
  \setlength\parskip{\itemizespacing}%
  \setlength\topsep{0pt}%
  \setlength\partopsep{0pt}%
  \setlength\parsep{0pt}%
  \setlength\itemsep{0pt}%
  \setlength\itemindent\parindent%
  \setlength\listparindent\parindent%
}
{\removelastskip\hrule height0pt\nobreak\egroup\oldendenumerate\unskip\vskip\itemizespacing%
  \def\item{\olditem}%
}
\newcommand\oldparforMain{}
\let\oldparforMain\par
\newcommand\topequationskip{.2\baselineskip}
\newcommand\botequationskip{.2\baselineskip}
\renewcommand\[[1][\topequationskip]{\begingroup\let\mysavedpar\par\let\par\oldparforMain\vskip#1\nopagebreak\hbox to\hsize\bgroup\hfil\(}
\renewcommand\][1][\botequationskip]{\)\hfil\egroup\hrule height 0pt\endgroup\@afterindentfalse\vskip#1}
\newbox\columnsbox
\newbox\tmpbox
\newdimen\columnsheight
\newdimen\columnwidth
\newdimen\remainingwidth
\newdimen\textwidthsave
\def\mycolumnsheight{}
\newcommand\columns[1]{%
  \def\mycolumnsheight{}%
  \setlength\remainingwidth\textwidth%
  \setbox\columnsbox=\vbox\bgroup\vskip0pt\vfil\hbox to\textwidth\bgroup#1\egroup\vfil\egroup%
  \columnsheight=\ht\columnsbox%
  \def\mycolumnsheight{to\columnsheight}%
  \hrule height 0pt\vtop{\hbox to\wd\columnsbox\bgroup#1\egroup}%
}
\def\commonpart{%
  \setlength\columnwidth{\wd\tmpbox}%
  \vtop{\vskip0pt\hbox to\columnwidth{{\box\tmpbox}}}%
  \advance\remainingwidth-\columnwidth%
  \setlength\textwidth\textwidthsave%
  \hsize\textwidthsave%
}
\def\column{\unskip\setlength\textwidthsave\textwidth\@ifnextchar[\@columnwith\@columnwithout}
\long\def\@columnwith[#1]#2{%
  \def\newhsize{#1\dimexpr\textwidth\relax}%
  \hsize\newhsize%
  \ifdim\hsize<0.1pt\hsize\remainingwidth\fi%
  \setlength\textwidth\hsize%
  \setbox\tmpbox=\hbox to\hsize\bgroup\hfil\vtop\mycolumnsheight{\vskip0pt#2\vskip0pt}\hfil\egroup%
  \commonpart%
}
\long\def\@columnwithout#1{%
  \hsize\remainingwidth%
  \setlength\textwidth\hsize%
  \setbox\tmpbox=\hbox\bgroup\vtop\mycolumnsheight{\vskip0pt#1\vskip0pt}\egroup%
  \commonpart%
}
\newcommand\cquad[1][.01]{\column[#1]{}}
\newenvironment{centre}{\begin{center}\unskip}{\end{center}\unskip}
\newcommand\abs[1]{{\left|#1\right|}}        
\renewcommand{\iff}{if and only if}
\newcommand{\ie}{i.e.~}
\newcommand{\eg}{e.g.~}
\newcommand{\resp}{resp.~}
\newcommand\dom[1]{\textsf{Dom}{(#1)}}
\newcommand{\eqdef}{:=\ }
\newcommand{\recdef}{::=\ }
\newcommand{\sqin}{\textsf{\footnotesize{E}}}
\newcommand\powerset[1]{\mathds P(#1)}
\newcommand{\Gam}{\Gamma}
\newcommand{\Del}{\Delta}
\newcommand\LJF{\textsf{LJF}}
\newcommand\LKF{\textsf{LKF}}
\newcommand\LAF[1][]{$\textsf{LAF}_{#1}$}
\newcommand\LAFcf[1][]{$\textsf{LAF}^{\textsf{cf}}_{#1}$}
\newcommand\mathFomega{F_\omega}
\newcommand\Fomega{\ifmmode\mathFomega\else$\mathFomega$\fi}
\newcommand\mathFomegaC{F_\omega^{\mathcal C}}
\newcommand\FomegaC{\ifmmode\mathFomegaC\else$\mathFomegaC$\fi}
\newcommand\mathDNE{\mathrm{DNE}}
\newcommand\DNE{\ifmmode\mathDNE\else$\mathDNE$\fi}
\newcommand\Coq{{\textsf{Coq}}}
\newcommand\pfunspace{\rightharpoonup}
\newcommand\cons[2]{#1\unskip\colon\hskip-.4em\colon\unskip#2}
\newcommand{\sep}{\mbox{\;\rule[-.35\ht\strutbox]{.7pt}{1.3\ht\strutbox}\;}}    
\newcommand{\subst}[3]{ \left\{{}^{#3}\hspace{-.2em}\diagup\hspace{-.2em}_{#2} \right\} #1 }
\newcommand{\Rew}[2][]{\stackrel{#1}{\longrightarrow}_{#2}\;}
\newcommand\ssembasis[3]{\left\llbracket{#3}\right\rrbracket_{#1}^{#2}}
\newcommand{\sem}[2][]{{\ssembasis {#1} {} {#2}}}
\newcommand{\SemTy}[2][]{{\ssembasis {#1} {} {#2}}}
\newcommand{\SemTyP}[2][]{{\ssembasis {#1} + {#2}}}
\newcommand{\SemTyN}[2][]{{\ssembasis {#1} - {#2}}}
\newcommand\SemTe[2]{\sem[#2] {#1}}
\newcommand\orth[2]{{#1\mathrel{{\perp}}#2}}
\newcommand\col{\unskip{\colon}\hskip-.2em}
\newcommand\XcolY[2]{#1\col#2}
\newcommand\varRead[2][]{#2\left[#1\right]}
\newcommand\unitt{\texttt{unit}}
\newcommand\uniti{()}
\newcommand{\seqg}[3]{\mbox{$\ {#1}_{#2}^{#3}\ $}}
\newcommand{\seqf} [2][]{\seqg{\vdash}     {#1}{#2}}
\newcommand{\decf} [2][]{\seqg{\Vdash}    {#1}{#2}}
\newcommand\DerF[4][]{{#2}\seqf[{#1}]{}{[#3]}{#4}}
\newcommand\Der[3][]{{#2}\seqf[{#1}]{}{#3}}
\newcommand\DerDec[4][]{{#3}\decf[{#1}]{#2}{#4}}
\newcommand\DerFLKF[3][]{\seqf[{#1}]{} #2\Downarrow #3}
\newcommand\DerLKF[3][]{\seqf[{#1}]{}#2\Uparrow #3}
\newcommand\DerFlLJF[4][]{[#2]\stackrel{#3}{\longrightarrow}[#4]}
\newcommand\DerFrLJF[3][]{[#2]{-}_{#3}{\rightarrow}}
\newcommand\DerLJF[4][]{[#2]#3\longrightarrow [#4]}
\newcommand\cut{\textsf{cut}}
\newcommand\daggerL{\raise3pt\hbox{\rotatebox{-40}{$\dagger$}}}
\newcommand\daggerR{\raise0pt\hbox{\rotatebox{40}{$\dagger$}}}
\newcommand\andP{{\wedge^+}}
\newcommand\andN{{\wedge^-}}
\newcommand\ou{{\vee}}
\newcommand\orP{{\vee^+}}
\newcommand\orN{{\vee^-}}
\newcommand\trueP{{\top^+}}
\newcommand\trueN{{\top^-}}
\newcommand\falseP{{\bot^+}}
\newcommand\falseN{{\bot^-}}
\newcommand\imp{{\Rightarrow}}
\newcommand{\non}[1]{{#1}^{\perp}}
\newcommand\RCextend[1][]{;}
\newcommand\DecompType[1][]{\mathbb D_{#1}}
\newcommand\Dstruct{\DecompType[\textsf{st}]}
\newcommand\SubsType[1][]{\bullet\bullet\bullet}
\newcommand\var[1][]{\textsf{Lab}_{#1}}
\newcommand\vare[1][]{\var[e]}
\newcommand\domP[1]{\textsf{dom}^+(#1)}
\newcommand\domN[1]{\textsf{dom}^-(#1)}
\newcommand\Contexts[1][]{\mathcal G_{#1}}
\newcommand\Cextend[1][]{;}
\newcommand\TContexts[1][]{\mathsf{Co}_{#1}}
\newcommand\Textend[2][]{;#2}
\newcommand\Atms[1][]{\mathbb A_{#1}}
\newcommand\Moles[1][]{\mathbb M_{#1}}
\newcommand\TDecs[1][]{\mathbb D_{#1}}
\newcommand{\Lit}{\mathbb L}
\newcommand{\Data}{\mathsf{Pat}}
\newcommand{\Datast}[1]{\abs {#1}}
\newcommand{\PTerms}{\mathsf{Terms}}
\newcommand{\Decomp}{\mathsf{Terms}^{\mathsf d}}
\newcommand\Drefute[1]{{\sim}#1}
\newcommand\Dand{,}
\newcommand\Dunit{\bullet}
\newcommand\Tunit{\bullet}
\newcommand\Tand{,}
\newcommand\THO[2][]{{#2}}
\newcommand\cutc[2]{\left\langle#1\mid#2\right\rangle}
\newcommand{\Acc}[2][]{\textsf{Im}^{#1}(#2)}
\newcommand\Ppos{\_^+}
\newcommand\Pneg{\_^-}
\newcommand\Ptrue{\bullet}
\newcommand\inj[2]{\textsf{inj}_{#1}{(#2)}}
\newcommand\paire[2]{({#1},{#2})}
\newcommand\project[2]{\pi_{#1}{(#2)}}
\newcommand\switchr[1]{{\curvearrowright}{(#1)}}
\newcommand\lefths{\textsf{l}}
\newcommand\righths{\textsf{r}}
\newcommand\SPrim{\mathscr L}
\newcommand\SPos{\mathscr P}
\newcommand\SNeg{\mathscr N}
\newcommand\spat[1]{\tilde{#1}}
\newcommand\SContexts{\tilde{\TContexts}}
\newcommand\SCextend[1][]{;}
\newcommand\SDecs{\tilde{\TDecs}}
\newcommand\MolesIneq{\triangleleft}
\newcommand\machine[2][]{\langle\hskip-.5em\langle\ #2\ \mid\ #1\ \rangle\hskip-.5em\rangle}
\renewcommand\sqin{\raise1pt\hbox{\large $\,\epsilon\,$}}
\newcommand\Index[2][]{{\em #2}}
\begin{document}

\title{Realisability semantics of abstract focussing, formalised}

\author{Stéphane Graham-Lengrand
  \institute{CNRS, École Polytechnique, INRIA, SRI International}
}

\newcommand\titlerunning{Realisability semantics of abstract focussing, formalised}
\newcommand\authorrunning{Stéphane Graham-Lengrand}

\maketitle

\begin{abstract}
  We present a sequent calculus for abstract focussing, equipped with
  proof-terms: in the tradition of Zeilberger's work, logical
  connectives and their introduction rules are left as a parameter of
  the system, which collapses the synchronous and asynchronous phases
  of focussing as macro rules. We go further by leaving as a parameter
  the operation that extends a context of hypotheses with new ones,
  which allows us to capture both classical and intuitionistic
  focussed sequent calculi.

  We then define the realisability semantics of (the proofs of) the
  system, on the basis of Munch-Maccagnoni's orthogonality models for
  the classical focussed sequent calculus, but now operating at the
  higher level of abstraction mentioned above. We prove, at that
  level, the Adequacy Lemma, namely that if a term is of type A, then
  in the model its denotation is in the (set-theoretic) interpretation
  of A.  This exhibits the fact that the universal quantification
  involved when taking the orthogonal of a set, reflects in the
  semantics Zeilberger's universal quantification in the macro rule
  for the asynchronous phase.

  The system and its semantics are all formalised in Coq.
\end{abstract}

\section{Introduction}\label{sec:intro}

The objective of this paper is to formalise a strong connection
between \emph{focussing} and \emph{realisability}. 

Focussing is a concept from proof theory that arose from the study of
Linear Logic~\cite{girard-ll,andreoli92focusing} with motivations in
proof-search and logic programming, and was then used for studying the
proof theory of classical and intuitionistic
logics~\cite{Gir:newclc,DJS95,DL:JLC07,liang09tcs}.

Realisability is a concept used since Kleene~\cite{KleeneSC:intint} to
witness provability of formulae and build models of their
proofs. While originally introduced in the context of constructive
logics, the methodology received a renewed attention with the concept
of \emph{orthogonality}, used by Girard to build models of Linear
Logic proofs, and then used to define the realisability semantics of
classical proofs~\cite{DanosKrivine00}.

Both focussing and realisability exploit the notion of \emph{polarity}
for formulae, with an asymmetric treatment of positive and negative
formulae.

In realisability, a primitive interpretation of a positive formula
such as $\exists x A$ (\resp $A_1\vee A_2$) is given as a set of pairs
$(t,\pi)$, where $t$ is a witness of existence and $\pi$ is in the
interpretation of $\subst A x t$ (\resp a set of injections $\inj
i\pi$, where $\pi$ is in the interpretation of $A_i$). In other words,
the primitive interpretation of positive formulae expresses a very
contructive approach to provability. In contrast, a negative formula
(such as $\forall x A$ or $A_1\Rightarrow A_2$) is interpreted ``by
duality'', as the set that is \emph{orthogonal} to the interpretation
of its (positive) negation.  In classical realisability, a
bi-orthogonal set completion provides denotations for all classical
proofs of positive formulae.

In focussing, introduction rules for connectives of the same polarity
can be chained without loss of generality: for instance if we decide
to prove $(A_1\vee A_2)\vee A_3$ by proving $A_1\vee A_2$, then we can
assume (without losing completeness) that a proof of this in turn
consists in a proof of $A_1$ or a proof of $A_2$ (rather than uses a
hypothesis, uses a lemma or reasons by contradiction).

Such a grouping of introduction steps for positives (\resp negatives)
is called a \emph{synchronous} phase (\resp \emph{asynchronous}
phase).
In order to express those phases as (or collapse them into) single
\emph{macro steps}, some formalisms for \emph{big-step focussing} (as
in Zeilberger's work~\cite{ZeilbergerPOPL08,Zeilberger08}) abstract
away from logical connectives and simply take as a parameter the
mechanism by which a positive formula can be decomposed into a
collection of positive atoms and negative formulae. While the proof of
a positive needs to exhibit such a decomposition, the proof of a
negative needs to range over such decompositions and perform a case
analysis.

This asymmetry, and this universal quantification over decompositions,
are reminiscent of the orthogonal construction of realisability
models.  To make the connection precise, we formalise the construction
of realisability models for (big-step) focussed systems in
Zeilberger's style.

In~\cite{MunchCSL09}, the classical realisability semantics
was studied for the classical sequent calculus, with an extended
notion of cut-elimination that produced focussed proofs.  Here we want
to avoid relying on the specificities of particular logical
connectives, and therefore lift this realisability semantics to the
more abstract level of Zeilberger's systems, whose big-step approach
also reveals the universal quantification that we want to connect to
the orthogonality construction. We even avoid relying on the
specificities of a particular logic as follows:
\begin{itemize}
\item We do not assume that formulae are syntax, \ie have an inductive
  structure, nor do we assume that ``positive atoms'' are
  particular kinds of formulae; positive atoms and formulae can now
  literally be two arbitrary sets, of elements called
  \Index[atom]{atoms} and \Index[molecule]{molecules}, respectively.
\item The operation that extends a context of hypotheses with new
  ones, is usually taken to be set or multiset union. We leave this
  extension operation as another parameter of the system, since
  tweaking it will allow the capture of different logics.
\end{itemize}

In Section~\ref{sec:LAFDefQF} we present our abstract system called
\LAF, seen through the Curry-Howard correspondence as a typing system
for (proof-)terms. Section~\ref{sec:LAFexprop} describes how to tune
(\ie instantiate) the notions of atoms, molecules, and context
extensions, so as to capture the big-step versions of standard
focussed sequent calculi, both classical and
intuitionistic. Section~\ref{sec:RealTerms} gives the definition of
realisability models, where terms and types are interpreted, and
proves the Adequacy Lemma. In very generic terms, if $t$ is a
proof(-term) for $A$ then in the model the interpretation of $t$ is in
the interpretation of $A$.  Finally, Section~\ref{sec:RealConsistency}
exhibits a simple model from which we derive the consistency of \LAF.

\section{The abstract focussed sequent calculus \LAF}
\label{sec:LAFDefQF}

An instance of \LAF\ is given by a tuple of parameters
\((\var[+],\var[-],\Atms,\Moles,\TContexts,\Data,\decf{})\) where each
parameter is described below.  We use $\rightarrow$ (\resp
$\pfunspace$) for the total (\resp partial) function space
constructor.

Since our abstract focussing system is a typing system, we use a
notion of typing contexts, \ie those structures denoted $\Gamma$ in a
typing judgement of the form $\Der\Gamma\ldots{}$. Two kinds of
``types'' are used (namely atoms and molecules), and what is declared
as having such types in a typing context, is two corresponding kinds
of labels:\footnote{We choose to call them ``labels'', rather than
  ``variables'', because ``variable'' suggests an object identified by
  a name that ``does not matter'' and somewhere subject to
  $\alpha$-conversion. Our labels form a deterministic way of indexing
  atoms and molecules in a context, and could accommodate De Bruijn's
  indices or De Bruijn's levels.}  \Index[positive label]{positive
  labels} and \Index[negative label]{negative labels}, respectively
ranging over $\var[+]$ and $\var[-]$. These two sets are the first two
parameters of \LAF.


\begin{definition}[Generic contexts, generic decomposition algebras]

  Given two sets $\mathcal A$ and $\mathcal B$, an \Index[context
    algebra]{$(\mathcal A,\mathcal B)$-context algebra} is an algebra
  of the form
  \[\left(
  \Contexts,
  {\footnotesize
    \left(\begin{array}{c@{}l@{}l}
      \Contexts\times\var[+]&\pfunspace&\mathcal A\\
      (\Gam,x^+)&\mapsto&\varRead[x^+]\Gam
    \end{array}\right),
    \left(\begin{array}{c@{}l@{}l}
      \Contexts\times\var[-]&\pfunspace&\mathcal B\\
      (\Gam,x^-)&\mapsto&\varRead[x^-]\Gam
    \end{array}\right),
    \left(\begin{array}{c@{}l@{}l}
      \Contexts\times\DecompType[\mathcal A,\mathcal B]&\rightarrow&\Contexts\\
      (\Gam,\Delta)&\mapsto&\Gam\Cextend\Delta
    \end{array}\right)
  }
  \right)
  \]
  whose elements are called $(\mathcal A,\mathcal
  B)$-\Index[context]{contexts}, and where $\DecompType[\mathcal
  A,\mathcal B]$, which we call the \Index{$(\mathcal A,\mathcal
    B)$-decomposition algebra} and whose elements are called
  $(\mathcal A,\mathcal B)$-\Index[decomposition]{decompositions}, is
  the free algebra defined by the following grammar:
  \[\Delta,\Delta_1,\ldots\recdef a\sep \Drefute b\sep\Dunit\sep\Delta_1\Dand\Delta_2\]
  where $a$ (\resp $b$) ranges over $\mathcal A$ (\resp $\mathcal B$).

  Let $\Dstruct$ abbreviate $\DecompType[\unitt,\unitt]$, whose
  elements we call \Index[decomposition structure]{decomposition
    structures}.

  The \Index[structure]{(decomposition) structure} of an $(\mathcal
  A,\mathcal B)$-decomposition $\Delta$, denoted $\abs\Delta$, is its
  obvious homomorphic projection in $\Dstruct$.

  We denote by $\domP\Gamma$ (\resp $\domN\Gamma$) the subset of
  $\var[+]$ (\resp $\var[-]$) where $\varRead[x^+]\Gam$ (\resp
  $\varRead[x^-]\Gam$) is defined, and say that $\Gamma$ is
  \Index{empty} if both $\domP\Gamma$ and $\domN\Gamma$ are.
\end{definition}

Intuitively, an $(\mathcal A,\mathcal B)$-decomposition $\Delta$ is
simply the packaging of elements of $\mathcal A$ and elements of
$\mathcal B$; we could flatten this packaging by seeing $\Dunit$ as
the empty set (or multiset), and $\Delta_1\Dand\Delta_2$ as the union
of the two sets (or multisets) $\Delta_1$ and $\Delta_2$.
Note that the coercion from $\mathcal B$ into $\DecompType[\mathcal
A,\mathcal B]$ is denoted with $\Drefute{}$. It helps
distinguishing it from the coercion from $\mathcal A$ (\eg when
$\mathcal A$ and $\mathcal B$ intersect each other), and in many
instances of \LAF\ it will remind us of the presence of an otherwise
implicit negation. But so far it has no logical meaning, and in
particular $\mathcal B$ is not equipped with an operator $\Drefute{}$
of syntactical or semantical nature.

Now we can specify the nature of the \LAF\ parameters:
\begin{definition}[\LAF\ parameters]
  Besides $\var[+]$ and $\var[-]$, \LAF\ is parameterised by
  \begin{itemize}
  \item two sets $\Atms$ and $\Moles$, whose elements are respectively
    called \Index[atom]{atoms} (denoted $a$, $a'$,\ldots), and
    \Index[molecule]{molecules} (denoted $M$, $M'$,\ldots);
  \item an $(\Atms,\Moles)$-context algebra $\TContexts$,
    whose elements are called \Index[typing context]{typing contexts};
  \item a \Index{pattern algebra}, an algebra of the form
    \[\left(\Data,
      {\footnotesize
        \left(\begin{array}{c@{}l@{}l}
            \Data&\rightarrow&\Dstruct\\
            p&\mapsto&\Datast p
          \end{array}\right)
      }
    \right)
    \] 
    whose elements are called \Index[pattern]{patterns},
  \item a \Index{decomposition relation}, \ie a set of elements
    \[(\DerDec{}{\_}{\XcolY \_{\_}}):(\DecompType\times\Data\times\Moles)\]
    such that if $\DerDec{}{\Delta}{\XcolY p{M}}$ then the structure of
    $\Delta$ is $\Datast p$.
  \end{itemize}

  The $(\Atms,\Moles)$-decomposition algebra, whose elements are
  called \Index{typing decompositions}, is called the \Index[typing
  decomposition algebra]{typing decomposition algebra} and is denoted
  $\DecompType$.
\end{definition}

The group of parameters \((\Atms,\Moles)\) specifies what the instance
of \LAF, as a logical system, talks about. A typical example is when
\(\Atms\) and \(\Moles\) are respectively the sets of (positive) atoms
and the set of (positive) formulae from a polarised
logic. Section~\ref{sec:LAFexprop} shows how our level of abstraction allows for
some interesting variants. In the Curry-Howard view, \(\Atms\) and
\(\Moles\) are our sets of types.

The last group of parameters \((\Data,\decf{})\) specifies the
structure of molecules. If \(\Moles\) is a set of formulae featuring
logical connectives, those parameters specify the introduction rules
for the connectives.
The intuition behind the terminology is that the decomposition
relation $\decf{}$ decomposes a molecule, according to a pattern, into
a typing decomposition which, as a first approximation, can be seen as
a ``collection of atoms and (hopefully smaller) molecules''.

\begin{definition}[\LAF\ system]
  Proof-terms are defined by the following syntax:
  \[
  \begin{array}{lll@{\recdef}l}
    \mbox{Positive terms }&\PTerms^+&t^+&pd\\
    \mbox{Decomposition terms }&\Decomp&d&x^+ \sep \THO f\sep\Tunit\sep d_1\Tand d_2 \\
    \mbox{Commands}&\PTerms&c& \cutc{x^-}{t^+} \sep \cutc{f}{t^+}
  \end{array}
  \]
  where $p$ ranges over $\Data$, $x^+$ ranges over $\var[+]$, $x^-$ ranges over $\var[-]$, and $f$ ranges over the partial function space $\Data\pfunspace\PTerms$.

  \LAF\ is the inference system of Fig.~\ref{def:LAFqf} defining the derivability of three kinds of sequents
  \[
  \begin{array}{l@{\quad:\quad}l}
    (\DerF\_ {\XcolY {\_} \_}{}) & (\TContexts\times \PTerms^+\times\Moles)\\
    (\Der\_ {\XcolY \_ \_}) & (\TContexts\times \Decomp\times\DecompType)\\
    (\Der\_ \_) & (\TContexts\times \PTerms)
  \end{array}
  \]
  We further impose in rule \textsf{async} that the domain of function $f$ be exactly those patterns that can decompose $M$ ($p\in\dom f$ \iff\ there exists $\Delta$ such that $\DerDec{}{\Delta}{\XcolY p{M}}$).

  \LAFcf\ is the inference system \LAF\ without the \cut-rule.
\end{definition}
  \begin{bfigure}[!h]
    \[
    \begin{array}{c}
      \infer[\textsf{sync}]{
        \DerF\Gam {\XcolY {p d} M}{}
      }{
        \DerDec{}{\Delta}{\XcolY p{M}}
        \quad
        \Der{\Gamma}{\XcolY  d{\Delta}}
      }
      \\[-10pt]
      \midline
      \\[-10pt]
      \infer{\Der\Gam{\XcolY{\Tunit}{\Dunit}}}{\strut}
      \qquad
      \infer{
        \Der\Gam{\XcolY{ d_1\Tand d_2}{\Del_1\Dand\Del_2}}
      }{
        \Der\Gam{\XcolY{ d_1}{\Del_1}}
        \quad
        \Der\Gam{\XcolY{ d_2}{\Del_2}}
      }
      \\\\
      \infer[\textsf{init}]{
        \Der{\Gam}{\XcolY{x^+}{a}}}{\varRead[x^+]\Gam = a
      }
      \qquad
      \infer[\textsf{async}]{
        \Der\Gam {\XcolY {\THO f} {\Drefute M}}
      }{
        \forall p,\forall\Delta,\quad\DerDec{}{\Delta}{\XcolY p{M}}\quad\imp\quad\Der{\Gamma\Cextend[x]\Delta}{f(p)}
      }
      \\[-10pt]
      \midline
      \\[-10pt]
      \infer[\textsf{select}]{\Der\Gam{\cutc{x^-}{t^+}}}{
        \DerF\Gam {\XcolY{t^+}{\varRead[x^-]\Gam}} {}
      }
      \qquad
      \infer[\cut]{\Der\Gam{\cutc{f}{t^+}}}{
        \Der\Gam {\XcolY {f}{\Drefute M}}\qquad
        \DerF\Gam {\XcolY{t^+}M} {}
      }
    \end{array}
    \]
    \caption{\LAF}
    \label{def:LAFqf}
  \end{bfigure}

An intuition of \LAF\ can be given in terms of proof-search:

When we want to ``prove'' a molecule, we first need to decompose it
into a collection of atoms and (refutations of) molecules (rule
$\textsf{sync}$). Each of those atoms must be found in the current
typing context (rule $\textsf{init}$). Each of those molecules must be
refuted, and the way to do this is to consider all the possible ways
that this molecule could be decomposed, and for each of those
decompositions, prove the inconsistency of the current typing context
extended with the decomposition (rule $\textsf{async}$). This can be
done by proving one of the molecules refuted in the typing context
(rule $\textsf{select}$) or refuted by a complex proof (rule
$\textsf{cut}$). Then a new cycle starts.

Now it will be useful to formalise the idea that, when a molecule $M$ is
decomposed into a collection of atoms and (refutations of) molecules,
the latter are ``smaller'' than $M$:
\begin{definition}[Well-founded \LAF\ instance]

  We write $M'\MolesIneq M$ if there are $\Delta$ and $p$
  such that $\DerDec{}{\Delta} {\XcolY p {M}}$ and $\Drefute{M'}$ is a
  leaf of $\Delta$.

  The \LAF\ instance is \Index[well-founded
    \LAF\ instance]{well-founded} if (the smallest order containing)
  $\MolesIneq$ is well-founded.
\end{definition}
Well-foundedness is a property that a \LAF\ instance may or may not
have, and which we will require to construct its realisability
semantics. A typical situation where it holds is when $M'\MolesIneq M$
implies that $M'$ is a sub-formula of $M$.


The above intuitions may become clearer when we instantiate the
parameters of \LAF\ with actual literals, formulae, etc in order to
capture existing systems: we shall therefore we illustrate system
\LAF\ by specifying different instances, providing each time the long
list of parameters, that capture different focussed sequent calculus
systems.

While \LAF\ is defined as a typing system (in other words with
proof-terms decorating proofs in the view of the Curry-Howard
correspondence), the traditional systems that we capture below are
purely logical, with no proof-term decorations. When encoding the
former into the latter, we therefore need to erase proof-term
annotation, and for this it is useful to project the notion of typing
context as follows:

\begin{definition}[Referable atoms and molecules]
  Given a typing context $\Gamma$, let $\Acc[+]\Gamma$ (\resp
  $\Acc[-]\Gamma$) be the image of function
  $x^+\mapsto\varRead[x^+]\Gamma$ (\resp
  $x^-\mapsto\varRead[x^-]\Gamma$), \ie the set of atoms (\resp
  molecules) that can be refered to, in $\Gamma$, by the use of a
  positive (\resp negative) label.
\end{definition}

\section{Examples in propositional logic}\label{sec:LAFexprop}

The parameters of \LAF\ will be specified so as to capture: the
classical focussed sequent calculus \LKF\ and the
intuitionistic one \LJF~\cite{liang09tcs}.

\subsection{Polarised classical logic - one-sided}
\label{sec:LAFK1qf}

In this sub-section we define the instance \LAF[K1] corresponding to
the (one-sided) calculus \LKF:

\begin{definition}[Literals, formulae, patterns, decomposition]
  \label{def:classpolsyntax}

  Let $\Atms$ be a set of elements called \emph{atoms} and ranged over by $a, a',\ldots$.

  \emph{Negations} of atoms $\non a, \non{a'},\ldots$ range over a set isomorphic to, but disjoint from, $\Atms$.

  Let $\Moles$ be the set defined by the first line of the following grammar for (polarised) formulae of classical logic:
  \[ 
  \begin{array}{llll}
    \mbox{Positive formulae}&  P,\ldots&\recdef a\sep\trueP \sep \falseP \sep A\andP B\sep A \orP B\\
    \mbox{Negative formulae}& N,\ldots&\recdef \non a\sep\trueN \sep \falseN \sep A\andN B\sep A \orN B\\
    \mbox{Unspecified formulae}&A&\recdef P \sep N
  \end{array}
  \]

  Negation is extended to formulae as usual by De Morgan's laws (see \eg\cite{liang09tcs}).

  The set $\Data$ of \Index[pattern]{pattern} is defined by the following grammar:
  \[p,p_1,p_2,\ldots\recdef \Ppos \sep \Pneg \sep \Ptrue\sep\paire {p_1}{p_2}\sep\inj i p\]

  The decomposition relation $(\DerDec{}{\_}{\XcolY
    \_{\_}}):(\DecompType\times\Data\times\Moles)$ is the restriction
  to molecules of the relation defined inductively for all formulae
  by the inference system of Fig.~\ref{fig:decompK1}.

  The map $p\mapsto \Datast p$ can be inferred from the decomposition relation.
\end{definition}
\begin{bfigure}
    \[
    \begin{array}c
      \infer{\DerDec{}{\Dunit}{\XcolY{\Ptrue}\trueP}}{}
      \qquad
      \infer{\DerDec{}{\Drefute {\non N}}{\XcolY{\Pneg}{N}}}{}
      \qquad
      \infer{\DerDec{}{a}{\XcolY{\Ppos}{a}}}{}\\\\
      \infer{\DerDec{}{\Delta_1,\Delta_2}{\XcolY{\paire{p_1}{p_2}}A_1\andP A_2}}{
        \DerDec{}{\Delta_1}{\XcolY{p_1}A_1}
        \quad
        \DerDec{}{\Delta_2}{\XcolY{p_2}A_2}
      }
      \qquad
      \infer{\DerDec{}{\Delta}{\XcolY{\inj i p}A_1\orP A_2}}{\DerDec{}{\Delta}{\XcolY{p}A_i}}  
    \end{array}
    \]
  \caption{Decomposition relation for \LAF[K1]}
  \label{fig:decompK1}
\end{bfigure}

Keeping the \textsf{sync} rule of \LAF[K1] in mind, we can already see
in Fig.~\ref{fig:decompK1} the traditional introduction rules of
positive connectives in polarised classical logic.  Note that these
rules make \LAF[K1] a well-founded \LAF\ instance, since $M'\MolesIneq
M$ implies that $M'$ is a sub-formula of $M$.  The rest of this
sub-section formalises that intuition and explains how \LAF[K1]
manages the introduction of negative connectives, etc.  But in order
to finish the instantiation of \LAF\ capturing \LKF, we need to define
typing contexts, \ie give $\var[+]$, $\var[-]$, and $\TContexts$. In
particular, we have to decide how to refer to elements of the typing
context. To avoid getting into aspects that may be considered as
implementation details (in~\cite{LengrandHDR} we present two
implementations based on De Bruijn's indices and De Bruijn's levels),
we will stay rather generic and only assume the following property:

\begin{definition}[Typing contexts]\label{def:classicalcontext} We assume that context extensions satisfy:
  \[
  \begin{array}{ll@{\qquad}ll}
    \Acc[+]{\Gamma\Cextend a}&=\Acc[+]{\Gamma}\cup \{a\}&  \Acc[-]{\Gamma\Cextend a}&=\Acc[-]{\Gamma}\\
    \Acc[+]{\Gamma\Cextend \Drefute M}&=\Acc[+]{\Gamma}&  \Acc[-]{\Gamma\Cextend \Drefute M}&=\Acc[-]{\Gamma}\cup \{M\}\\
    \Acc[\pm]{\Gamma\Cextend\Dunit}&=\Acc[\pm]{\Gamma}&
    \Acc[\pm]{\Gamma\Cextend(\Delta_1\Dand\Delta_2)}&=\Acc[\pm]{\Gamma\Cextend\Delta_1\Cextend\Delta_2}
  \end{array}
  \]
  where $\pm$ stands for either $+$ or $-$.
\end{definition}


We now relate (cut-free) \LAFcf[K1] and the \LKF\ system
of~\cite{liang09tcs} by mapping sequents:


\begin{definition}[Mapping sequents]

  We encode the sequents of \LAF[K1] (regardless of derivability) to those of \LKF\ as follows:
  \[\begin{array}{l@{\qquad\eqdef\qquad}l}
  \phi(\Der\Gamma {c})&\DerLKF {\non{\Acc[+]\Gamma},{\Acc[-]\Gamma}}{}\\
  \phi(\Der\Gamma {\XcolY {x^+} {a}})&\DerFLKF{\non{\Acc[+]\Gamma},{\Acc[-]\Gamma}} {a}\\
  \phi(\Der\Gamma {\XcolY {f} {\Drefute P}})&\DerFLKF {\non{\Acc[+]\Gamma},{\Acc[-]\Gamma}}{\non P}\\
  \phi(\DerF\Gamma {\XcolY {t^+} P}{})&\DerFLKF{\non{\Acc[+]\Gamma},{\Acc[-]\Gamma}} {P}
  \end{array}
  \]
\end{definition}

\begin{theorem}\label{th:adequacyK1} $\phi$ satisfies structural adequacy between \LAFcf[K1]\ and \LKF.
\end{theorem}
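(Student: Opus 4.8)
The plan is to read structural adequacy as the statement that $\phi$ lifts to a bijection between derivations: each \LAFcf[K1]\ derivation of a sequent $S$ is sent to an \LKF\ derivation of $\phi(S)$ and every \LKF\ derivation of $\phi(S)$ comes back, the two translations being mutually inverse. I would prove this by mutual induction on the two derivations, guided by a rule-by-rule correspondence between Fig.~\ref{def:LAFqf} and the rules of \LKF: LAF's \textsf{init} corresponds to the initial rule of \LKF, \textsf{sync} (together with the decomposition relation of Fig.~\ref{fig:decompK1} and the decomposition-term rules) matches the synchronous right-introduction phase, \textsf{async} matches a maximal asynchronous phase, and \textsf{select} matches the \textsf{Decide} rule. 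Since all derivations are finite, this induction needs no well-foundedness hypothesis.

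For the forward direction I proceed by induction on the \LAFcf[K1]\ derivation, following the four judgment forms on which $\phi$ is defined. The base case \textsf{init} is immediate: from $\varRead[x^+]\Gam = a$ we get $a\in\Acc[+]\Gamma$, so $\non a$ occurs in the \LKF\ context and the focused sequent $\DerFLKF{\non{\Acc[+]\Gamma},\Acc[-]\Gamma}{a}$ is closed by the initial rule. A \textsf{sync}-rooted derivation of $\DerF\Gamma{\XcolY{pd}P}{}$ is translated as a whole: I read Fig.~\ref{fig:decompK1} as the synchronous rules themselves, so the pattern $p$ drives a synchronous subtree in which $\paire{p_1}{p_2}$, $\inj i p$, $\Ptrue$ become the $\andP$, $\orP$, $\trueP$ rules, $\Ppos$ reaches a positive atom under focus, and $\Pneg$ triggers a \textsf{Release} into the asynchronous phase; the decomposition term $d$ then supplies exactly the sub-derivations at the leaves of this subtree, its constructors $\Tunit$ and $d_1\Tand d_2$ matching the nullary and binary branchings while the leaves $x^+$ and $\THO f$ recurse through $\phi$ on \textsf{init}- and \textsf{async}-judgments. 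The invariant maintained throughout is that the $\phi$-image $\non{\Acc[+]\Gamma},\Acc[-]\Gamma$ of the context tracks \LKF's storage, so that extending $\Gamma$ by $\Delta$ according to Def.~\ref{def:classicalcontext} coincides with the \textsf{Store} steps that conclude an asynchronous phase.

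The backward direction rebuilds an \LAFcf[K1]\ derivation from an \LKF\ one, and here lies the main obstacle: \textsf{async} is a single big step quantifying over \emph{all} pairs $(p,\Delta)$ with $\DerDec{}{\Delta}{\XcolY p M}$, whereas \LKF\ decomposes the negative formula $\non M$ one connective at a time, interleaving \textsf{Store} steps in a branching phase whose precise shape depends a priori on the order of rule applications. I resolve this using the invertibility of the \LKF\ asynchronous rules: a maximal asynchronous phase on $\Uparrow\non M$ is determined up to the choices it records, and its leaves are in bijection with $\{(p,\Delta)\mid\DerDec{}{\Delta}{\XcolY p M}\}$, each leaf carrying the context $\non{\Acc[+]\Gamma},\Acc[-]\Gamma$ enriched by the stored literals and positive subformulae, which equals $\Acc[\pm]{\Gamma\Cextend\Delta}$ by Def.~\ref{def:classicalcontext}. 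Reading off the sub-derivation sitting at each leaf yields precisely the function $f$ whose domain the \textsf{async} side-condition forces to be the decomposing patterns of $M$, and conversely $f$ expands into such a phase. With this equivalence of asynchronous phases in hand, the cases \textsf{select}/\textsf{Decide} and the command sequent (whose $\Uparrow$ zone is empty) are immediate, and checking that the two translations are mutually inverse — routine once the rule-by-rule matching is fixed — establishes structural adequacy.
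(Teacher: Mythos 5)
Your rule-by-rule dictionary — \textsf{init} against the initial rule, \textsf{sync} plus Fig.~\ref{fig:decompK1} against the synchronous phase, \textsf{async} against a maximal asynchronous phase whose leaves are in bijection with the pairs $(p,\Delta)$ such that $\DerDec{}{\Delta}{\XcolY p{M}}$, \textsf{select} against \Decide\ — is the right core content for this theorem, as is your observation that no well-foundedness is needed. Note, for calibration, that the paper itself does not prove the statement: it defers both the precise definition of structural adequacy and the proof to~\cite{LengrandHDR}, saying only that adequacy means derivability is preserved \emph{compositionally} (a derivation in one system is mapped to a derivation in the other, subderivations to subderivations).

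The genuine gap is your reading of structural adequacy as a bijection on derivations with mutually inverse translations: this is strictly stronger than the compositional correspondence just described, it is false for $\phi$, and so the final step of your plan (``checking that the two translations are mutually inverse'') cannot be carried out. The encoding $\phi$ erases proof-terms and labels. If $x^+\neq y^+$ but $\varRead[x^+]\Gamma=\varRead[y^+]\Gamma=a$, then $\Der\Gamma{\XcolY{x^+}{a}}$ and $\Der\Gamma{\XcolY{y^+}{a}}$ have distinct (one-node) derivations, yet both are sent to the same \LKF\ derivation, namely the initial rule on $\DerFLKF{\non{\Acc[+]\Gamma},{\Acc[-]\Gamma}}{a}$; the same collapse occurs at \textsf{select} when two negative labels refer to the same stored molecule. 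So $\phi$ on derivations is not injective and admits no inverse. Conversely, fix $S=\Der\Gamma{\cutc{x^-}{t^+}}$: every image of a derivation of $S$ begins by deciding on $\varRead[x^-]\Gamma$, whereas $\phi(S)$ (whose asynchronous zone is empty) may well have \LKF\ derivations deciding on a different stored positive formula; such derivations are not images of any derivation of $S$, so ``every \LKF\ derivation of $\phi(S)$ comes back'' to a derivation of $S$ fails as well. What is true, and what your mutual induction does establish once the mutual-inverse claim is dropped, is the intended statement: every \LAFcf[K1] derivation maps compositionally to an \LKF\ derivation of the translated sequent, and every \LKF\ derivation of a translated sequent is the image of some \LAFcf[K1] derivation of a suitably chosen proof-term — the backward construction involving arbitrary choices of labels, hence yielding a section of $\phi$, not an inverse.
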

The precise notion of adequacy used here is formalised
in~\cite{LengrandHDR}; let us just say here that it preserves the
derivability of sequents in a compositional way (a derivation $\pi$ in one
system is mapped to a derivation $\pi'$ in the other system, and its
subderivations are mapped to subderivations of $\pi'$).

\subsection{Polarised intuitionistic logic}

In this sub-section we define the instance \LAF[J] corresponding to
the (two-sided) calculus \LJF. Because it it two-sided, and the \LAF\
framework itself does not formalise the notion of side (it is not
incorrect to see \LAF\ as being intrinsically one-sided), we shall
embed a \emph{side information} in the notions of atoms and molecules:

\begin{definition}[Literals, formulae, patterns, decomposition]\nopagesplit

  Let $\Lit^+$ (\resp $\Lit^-$) be a set of elements called positive (\resp negative) literals, and ranged over by $l^+,l_1^+,l_2^+,\ldots$ (\resp $l^-,l^-_1,l^-_2,\ldots$).
  Formulae are defined by the following grammar:
  \[ 
  \begin{array}{llll}
    \mbox{Positive formulae}&  P,\ldots&\recdef l^+\sep\trueP \sep \falseP \sep A\andP B\sep A \ou B\\
    \mbox{Negative formulae}& N,\ldots&\recdef l^-\sep\trueN \sep \falseN \sep A\andN B\sep A \imp B\sep \neg A\\
    \mbox{Unspecified formulae}&A&\recdef P \sep N
  \end{array}
  \]

  We \emph{position} a literal or a formula on the left-hand side or the right-hand side of a sequent by combining it with an element, called \Index{side information}, of the set $\{\lefths,\righths\}$: we define
  \[\begin{array}{ll}
  \Atms&\eqdef\{(l^+,\righths)\sep l^+\mbox{ positive literal}\}\cup\{(l^-,\lefths)\sep l^-\mbox{ negative literal}\}\cup\{(\falseN,\lefths)\}\\
  \Moles&\eqdef\{(P,\righths)\sep P\mbox{ positive formula}\}\cup\{(N,\lefths)\sep N\mbox{ negative formula}\}
  \end{array}
  \]
  In the rest of this sub-section $v$ stands for either a negative literal $l^-$ or $\falseN$.

  The set $\Data$ of \Index[pattern]{pattern} is defined by the following grammar:
  \[\begin{array}{lrl}
  p,p_1,p_2,\ldots&\recdef& \Ppos_\righths \sep \Pneg_\righths \sep \Ptrue_\righths\sep\paire {p_1}{p_2}\sep\inj i p\\
  &\sep&\Ppos_\lefths \sep \Pneg_\lefths \sep \Ptrue_\lefths\sep\cons {p_1}{p_2}\sep\project i p\sep\switchr p
  \end{array}
  \]

  The decomposition relation $(\DerDec{}{\_}{\XcolY
    \_{\_}}):(\DecompType\times\Data\times\Moles)$ is the restriction
  to molecules of the relation defined inductively for all
  positioned formulae by the inference system of Fig.~\ref{fig:decompJ}.
\end{definition}
\begin{bfigure}[!h]
    \[
    \begin{array}c
      \infer{\DerDec{}{\Drefute {(N,\lefths)}}{\XcolY{\Pneg_\righths}{(N,\righths)}}}{\strut}
      \qquad
      \infer{\DerDec{}{(l^+,\righths)}{\XcolY{\Ppos_\righths}{(l^+,\righths)}}}{\strut}
      \\\\
      \infer{\DerDec{}{\Dunit}{\XcolY{\Ptrue_\righths}{(\trueP,\righths)}}}{\strut}
            \\\\
            \infer{\DerDec{}{\Delta_1,\Delta_2}{\XcolY{\paire{p_1}{p_2}}{(A_1\andP A_2,\righths)}}}{
              \DerDec{}{\Delta_1}{\XcolY{p_1}{(A_1,\righths)}}
              \quad
              \DerDec{}{\Delta_2}{\XcolY{p_2}{(A_2,\righths)}}
            }
            \qquad
            \infer{\DerDec{}{\Delta}{\XcolY{\inj i p}{(A_1\ou A_2,\righths)}}}{\DerDec{}{\Delta}{\XcolY{p}{(A_i,\righths)}}}\\\\
            \infer{\DerDec{}{\Drefute {(P,\righths)}}{\XcolY{\Pneg_\lefths}{(P,\lefths)}}}{\strut}
            \qquad
            \infer{\DerDec{}{(l^-,\lefths)}{\XcolY{\Ppos_\lefths}{(l^-,\lefths)}}}{\strut}
            \\\\
            \infer{\DerDec{}{(\falseN,\lefths)}{\XcolY{\Ptrue_\lefths}{(\falseN,\lefths)}}}{\strut}
            \qquad
            \infer
                {\DerDec{}{\Delta\Dand{(\falseN,\lefths)}}{\XcolY{\switchr p}{(\neg A,\lefths)}}}
                {\DerDec{}{\Delta}{\XcolY{p}{(A,\righths)}}}
                \\\\
                  \infer{\DerDec{}{\Delta_1,\Delta_2}{\XcolY{\cons{p_1}{p_2}}{(A_1\imp A_2,\lefths)}}}{
                    \DerDec{}{\Delta_1}{\XcolY{p_1}{(A_1,\righths)}}
                    \quad
                    \DerDec{}{\Delta_2}{\XcolY{p_2}{(A_2,\lefths)}}
                  }
                  \qquad
                  \infer{\DerDec{}{\Delta}{\XcolY{\project i p}{(A_1\andN A_2,\lefths)}}}{\DerDec{}{\Delta}{\XcolY{p}{(A_i,\lefths)}}}
    \end{array}
    \]
  \caption{Decomposition relation for \LAF[J]}
  \label{fig:decompJ}
\end{bfigure}

Again, we can already see in Fig.~\ref{fig:decompJ} the traditional
right-introduction rules of positive connectives and left-introduction
rules of negative connectives, and again, it is clear from these rules
that \LAF[J] is well-founded.

We now interpret \LAF[J] sequents as intuitionistic sequents (from \eg
\LJF~\cite{liang09tcs}):
\begin{definition}[{\LAF[J]} sequents as two-sided \LJF\ sequents]

 First, when $\pm$ is either $+$ or $-$, we define
    \[
    \begin{array}l
      \Acc[\pm\righths]\Gamma\eqdef\{A\mid(A,\righths)\in\Acc[\pm]\Gamma\}\\
      \Acc[+\lefths]\Gamma\eqdef\{l^-\mid(l^-,\lefths)\in\Acc[+]\Gamma\}\\
      \Acc[-\lefths]\Gamma\eqdef\{N\mid(N,\lefths)\in\Acc[-]\Gamma\}
    \end{array}
    \]
 
    Then we define the encoding:
  \[\begin{array}{l@{\qquad\eqdef\qquad}l}
  \phi(\Der\Gamma {c})&\DerLJF {\Acc[+\righths]\Gamma,\Acc[-\lefths]\Gamma}{}{\Acc[+\lefths]\Gamma,\Acc[-\righths]\Gamma}\\
  \phi(\Der\Gamma {\XcolY {x^+} {(l^-,\lefths)}})
  &\DerFlLJF{\Acc[+\righths]\Gamma,\Acc[-\lefths]\Gamma} {l^-}{\Acc[+\lefths]\Gamma,\Acc[-\righths]\Gamma}\\
  \phi(\Der\Gamma {\XcolY {f} {\Drefute {(P,\righths)}}})
  &\DerFlLJF {\Acc[+\righths]\Gamma,\Acc[-\lefths]\Gamma}{P}{\Acc[+\lefths]\Gamma,\Acc[-\righths]\Gamma}\\
  \phi(\DerF\Gamma {\XcolY {t^+} {(N,\lefths)}}{})
  &\DerFlLJF{\Acc[+\righths]\Gamma,\Acc[-\lefths]\Gamma} {N}{\Acc[+\lefths]\Gamma,\Acc[-\righths]\Gamma}\\
  \phi(\Der\Gamma {\XcolY {x^+} {(l^+,\righths)}})
  &\DerFrLJF{\Acc[+\righths]\Gamma,\Acc[-\lefths]\Gamma} {l^+}\\
  \phi(\Der\Gamma {\XcolY {f} {\Drefute {(N,\lefths)}}})
  &\DerFrLJF {\Acc[+\righths]\Gamma,\Acc[-\lefths]\Gamma}{N}\\
  \phi(\DerF\Gamma {\XcolY {t^+} {(P,\righths)}}{})
  &\DerFrLJF{\Acc[+\righths]\Gamma,\Acc[-\lefths]\Gamma} {P}
  \end{array}
  \]

In the first four cases, we require $\Acc[+\lefths]\Gamma,\Acc[-\righths]\Gamma$ to be a singleton (or be empty).
\end{definition}

\begin{definition}[Typing contexts]\label{def:intuitionisticcontext}
 We assume that we always have $(\falseN,\lefths)\in\Acc[+]{\Gamma}$ and that
 \[
 \begin{array}c
   \begin{array}{ll@{\qquad}ll}
    \Acc[+]{\Gamma\Cextend (l^+,\righths)}&=\Acc[+]{\Gamma}\cup \{(l^+,\righths)\}
    &
    \Acc[-]{\Gamma\Cextend a}&=\Acc[-]{\Gamma}\\
    \Acc[+]{\Gamma\Cextend \Drefute M}&=\Acc[+]{\Gamma}
    &
    \Acc[-]{\Gamma\Cextend \Drefute{(N,\lefths)}}&=\Acc[-]{\Gamma}\cup \{(N,\lefths)\}\\
    \Acc[\pm]{\Gamma\Cextend\Dunit}&=\Acc[\pm]{\Gamma}
    &
   \Acc[\pm]{\Gamma\Cextend(\Delta_1\Dand\Delta_2)}&=\Acc[\pm]{\Gamma\Cextend\Delta_1\Cextend\Delta_2}
  \end{array}\\
   \begin{array}{ll}
    \Acc[+]{\Gamma\Cextend (v,\lefths)}
    &=\{(l^+,\righths)\mid (l^+,\righths)\in\Acc[+]{\Gamma}\}\cup\{(v,\lefths),(\falseN,\lefths)\}\\
    \Acc[-]{\Gamma\Cextend \Drefute{(P,\righths)}}
    &=\{(N,\lefths)\mid (N,\lefths)\in\Acc[-]{\Gamma}\}\cup\{(P,\righths)\}\\
  \end{array}
  \end{array}
  \]
  where again $\pm$ stands for either $+$ or $-$ and $v$ stands for either a negative literal $l^-$ or $\falseN$.
\end{definition}

The first three lines are the same as those assumed for $K1$, except
they are restricted to those cases where we do not try to add to $\Gamma$
an atom or a molecule that is interpreted as going to the right-hand
side of a sequent. When we want to do that, this atom or molecule
should overwrite the previous atom(s) or molecule(s) that was (were)
interpreted as being on the right-hand side; this is done in the last
two lines, where $\Acc[+\lefths]{\Gamma},\Acc[-\righths]{\Gamma}$ is
completely erased.

\begin{theorem}\label{th:adequacyJ}
  $\phi$ satisfies structural adequacy between \LAFcf[J]\ and \LJF.
\end{theorem}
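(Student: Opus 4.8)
The plan is to follow the template of Theorem~\ref{th:adequacyK1} and establish a compositional, rule-by-rule correspondence between the derivations of \LAFcf[J] and those of \LJF under $\phi$. Unfolding the notion of structural adequacy, this means proving two statements by simultaneous induction on derivations: every \LAFcf[J] derivation of a sequent $S$ produces an \LJF derivation of $\phi(S)$ whose immediate subderivations are the $\phi$-images of the immediate subderivations of the original, and conversely every \LJF derivation of $\phi(S)$ is obtained this way. So the argument reduces to a case analysis matching each \LAF[J] inference rule, read through $\phi$, against the corresponding \LJF rule; since \LAFcf[J] has no \cut, that rule need not be treated.

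First I would dispatch the rules whose image is essentially dictated by the decomposition relation of Fig.~\ref{fig:decompJ}. The \textsf{sync} rule applied to a right-positioned decomposition reproduces the \LJF right-introduction rules for positive formulae, and applied to a left-positioned decomposition it reproduces the left-introduction rules for negative formulae; I would run through each clause of Fig.~\ref{fig:decompJ} in turn, with special attention to the implication clause for $\cons{p_1}{p_2}$ (which sends the antecedent to the right and keeps the succedent on the left) and the negation clause for $\switchr{p}$. The \textsf{init} rule corresponds to the \LJF identity, \textsf{select} to the focus (``decide'') step that chooses a formula from the context, and \textsf{async} to the introduction phase in which the universal quantification over all patterns $p$ and decompositions $\Delta$ is matched against the family of premises of the corresponding \LJF rule.

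The technical backbone is a \emph{context-translation lemma} asserting that $\phi$ commutes with context extension: by induction on the structure of a decomposition $\Delta$ (using Def.~\ref{def:intuitionisticcontext}), I would compute the four projections $\Acc[+\righths]{\Gamma\Cextend\Delta}$, $\Acc[+\lefths]{\Gamma\Cextend\Delta}$, $\Acc[-\righths]{\Gamma\Cextend\Delta}$ and $\Acc[-\lefths]{\Gamma\Cextend\Delta}$ in terms of those of $\Gamma$ and the atoms and molecules occurring in $\Delta$, distributed across the two sides of the \LJF sequent as the encoding prescribes. Feeding this lemma into the rule-by-rule matching then shows that each \LAF[J] premise is sent exactly to the \LJF premise required, which in turn yields the subderivation-preserving (``structural'') character of $\phi$ directly from the construction.

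The hard part, absent from the classical case of Theorem~\ref{th:adequacyK1}, is the single-conclusion discipline of \LJF. The right-hand side of an \LJF sequent carries at most one formula, which is exactly what the side-condition ``$\Acc[+\lefths]{\Gamma},\Acc[-\righths]{\Gamma}$ is a singleton (or empty)'' records, and the delicate feature of Def.~\ref{def:intuitionisticcontext} is that adding a left-positioned atom $(v,\lefths)$ or refuting a right-positioned molecule $\Drefute{(P,\righths)}$ \emph{overwrites} the current conclusion rather than accumulating. I expect the main obstacle to be showing that this overwriting behaviour is consistent with the singleton invariant and matches the way \LJF rules replace the succedent --- for instance that the antecedent-proving premise of the left-implication rule corresponds to extending the context with the part of $\Delta$ coming from $(A_1,\righths)$, thereby erasing the previous conclusion. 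Once this invariant is shown to be preserved throughout the induction, the remaining cases follow the classical pattern and the theorem follows.
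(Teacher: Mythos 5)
Your proposal is sound and is essentially the proof the paper intends, but note that the paper itself offers no details for this theorem: its entire proof is the remark that ``the details are similar to those of Theorem~\ref{th:adequacyK1}, relying on the \LJF\ properties expressed in~\cite{liang09tcs}'', with even the precise definition of structural adequacy deferred to the cited companion formalisation. Your unfolding --- simultaneous induction on derivations, a context-translation lemma computing the four projections of $\Gamma\Cextend\Delta$ from Definition~\ref{def:intuitionisticcontext}, and preservation of the singleton invariant under the overwriting extensions --- is the natural way to discharge that sketch; the one point you should make explicit is that a single \textsf{sync} or \textsf{async} step of \LAF[J] (together with the decomposition-term rules above it) must be matched against an entire \LJF\ \emph{phase}, a segment of several small-step rules, rather than against a single \LJF\ rule, and establishing that such phases act as macro rules whose premises are exactly your pattern-indexed family is precisely what the appeal to the \LJF\ meta-properties of~\cite{liang09tcs} is for.
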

The details are similar to those of Theorem~\ref{th:adequacyK1}, relying on the \LJF\ properties expressed in~\cite{liang09tcs}.

\section{Realisability semantics of \LAF}
\label{sec:RealTerms}
\label{sec:RealTypes}
\label{sec:RealAdequacy}

We now look at the semantics of \LAF\ systems, setting as an objective
the definition of models and the proof of their correctness at the
same generic level as that of \LAF.

In this section, $\powerset{\mathcal A}$ stands for the power set of
a given set $\mathcal A$.

Given a \LAF\ instance, we define the following notion
of realisability algebra:

\begin{definition}[Realisability algebra]
  A \Index{realisability algebra} is an algebra of the form
  \[
  \begin{array}l
  \left(
  \SPrim,\SPos,\SNeg, \orth \ \ , \SContexts,
       {\footnotesize
         \left(\begin{array}{c@{}l@{}l}
           \Data&\rightarrow&(\DecompType[\SPrim,\SNeg]\rightarrow\SPos)\\
           p&\mapsto&\spat p
         \end{array}\right)
       },
       {\footnotesize
         \left(\begin{array}{c@{}l@{}l}
           (\Data\pfunspace\PTerms)\times\SContexts&\pfunspace&\SNeg\\
           (f,\rho)&\mapsto&\sem[\rho] f
         \end{array}\right)
       },
       {\footnotesize
        \left(\begin{array}{c@{}l@{}l}
          \Atms&\rightarrow&\powerset{\SPrim}\\
          a&\mapsto&\sem{a}
        \end{array}\right)
       }       
  \right)
  \end{array}
  \]

  where
\begin{itemize}
\item $\SPrim,\SPos,\SNeg$ are three arbitrary sets of elements called
  \Index[label denotation]{label denotations},
  \Index[positive denotation]{positive denotations},
  \Index[negative denotation]{negative denotations},
  respectively;
\item $\orth \ \ $ is a relation between negative and positive
  denotations ($\orth \ \ \subseteq \SNeg\times\SPos$), called the
  \Index{orthogonality relation};
\item $\SContexts$ is a $(\SPrim,\SNeg)$-context algebra,
  whose elements, denoted $\rho,\rho',\ldots$, are called
  \Index[semantic context]{semantic contexts}.
\end{itemize}

The $(\SPrim,\SNeg)$-decomposition algebra $\DecompType[\SPrim,\SNeg]$
is abbreviated $\SDecs$; its elements, denoted $\frak\Delta$,
$\frak\Delta'$\ldots, are called \Index[semantic
decomposition]{semantic decompositions}.
\end{definition}

Given a model structure, we can define the interpretation of
proof-terms. The model structure already gives an interpretation for
the partial functions $f$ from patterns to commands. We extend it to
all proof-terms as follows:

\begin{definition}[Interpretation of proof-terms]

  Positive terms (in $\PTerms^+$) are interpreted as positive
  denotations (in $\SPos$), decomposition terms (in $\Decomp$) are
  interpreted as semantic decompositions (in
  $\SDecs$), and commands (in $\PTerms$)
  are interpreted as pairs in $\SNeg\times\SPos$ (that may or may not
  be orthogonal), as follows:
  \[
  \begin{array}{ll@{\qquad}ll@{}l@{}l}
    \sem[\rho]{pd}&\eqdef\spat p(\sem[\rho]d)
    &\sem[\rho]\Tunit&\eqdef\Dunit
    &\sem[\rho]{\cutc{x^-}{t^+}}&\eqdef(\varRead[x^-]\rho,\sem[\rho]{t^+})\\
    &&\sem[\rho]{d_1\Tand d_2}&\eqdef\sem[\rho]{d_1}\Dand\sem[\rho]{d_2}
    &\sem[\rho]{\cutc{f}{t^+}}&\eqdef(\sem[\rho]{\THO f},\sem[\rho]{t^+})\\
    &&\sem[\rho]{x^+}&\eqdef\varRead[x^+]\rho\\
    &&\sem[\rho]{\THO f}&\eqdef\sem[\rho]{\THO f}\mbox{ as given by the}&\mbox{ realisability}&\mbox{ algebra}\\
  \end{array}
  \]
\end{definition}

Our objective is now the Adequacy Lemma whereby, if $t$ is of type $A$
then the interpretation of $t$ is in the interpretation of $A$.  We
have already defined the interpretation of proof-terms in a model
structure.  We now proceed to define the interpretation of types.

In system \LAF, there are four concepts of ``type inhabitation'':
\begin{enumerate}
\item ``proving'' an atom by finding a suitable positive label in the
  typing context;
\item ``proving'' a molecule by choosing a way to decompose it into a
  typing decomposition;
\item ``refuting'' a molecule by case analysing all the possible ways
  of decomposing it into a typing decomposition;
\item ``proving'' a typing decomposition by inhabiting it with a decomposition term.
\end{enumerate}

Correspondingly, we have the four following interpretations, with the
interpretations of atoms (1.) in $\powerset{\SPrim}$ being arbitrary and
provided as a parameter of a realisability algebra:

\begin{definition}[Interpretation of types and typing contexts]
  Assume the instance of \LAF\ is well-founded. We define (by
  induction on the well-founded ordering between molecules):\\
  2. the positive interpretation of a molecule in $\powerset{\SPos}$;\\ 
  3. the negative interpretation of a molecule in $\powerset{\SNeg}$;\\ 
  4. the interpretation of a typing decomposition in $\powerset{\DecompType[\SPrim,\SNeg]}$:

  \[\begin{array}{llll}
  \SemTyP {M}  &\eqdef \{ \spat p (\frak \Delta)\in\SPos
  &\mid \frak\Delta\in\SemTy{\Delta},\mbox{ and } \DerDec{}{\Delta} {\XcolY p {M}}\}\\[5pt]
  \SemTyN {M}  &\eqdef \{ \frak n\in \SNeg
  &\mid \forall \frak p \in\SemTyP M , \orth{\frak n}{\frak p}\}\\[5pt]
  \SemTy {\Dunit} &\eqdef\{\Dunit\}\\
  \SemTy {\Delta_1,\Delta_2} &\eqdef\{\frak \Delta_1,\frak \Delta_2 
  &\mid \frak \Delta_1\in\SemTy{\Delta_1}\mbox{ and }\frak \Delta_2\in\SemTy{\Delta_2}\}\\
  \SemTy {a} &\eqdef\SemTy {a}&\mbox{ as given by the realisability algebra}\\
  \SemTy {\Drefute M} &\eqdef\{\Drefute\frak n &\mid \frak n\in\SemTyN {M}\}
  \end{array}
  \]

  We then define the interpretation of a typing
  context:
  \[\begin{array}{llll}
  \SemTy\Gamma\eqdef\{\rho\in\SContexts\mid
  &\forall x^+\in\domP\rho,\ \varRead[x^+]\rho\in\SemTy{\varRead[x^+]\Gamma}\\
  &\forall x^-\in\domN\rho,\ \varRead[x^-]\rho\in\SemTyN{\varRead[x^-]\Gamma}
  &\}
  \end{array}\]
\end{definition}

Now that we have defined the interpretation of terms and the interpretation of types, we get to the Adequacy Lemma.

\begin{lemma}[Adequacy for \LAF]\label{lem:adequacy}
  We assume the following hypotheses:
  \begin{enumerate}
  \item[Well-foundedness:]
    The \LAF\ instance is well-founded.
  \item[Typing correlation:]
    If $\rho\in\SemTy\Gamma$ and $\frak \Delta\in\SemTy{\Delta}$ then $(\rho\Cextend\frak\Delta)\in\SemTy{\Gamma\Textend{\Delta}}$.
  \item[Stability:]
    If $\frak d\in\SemTy{\Delta}$ for some $\Delta$ and $\SemTe{f(p)}{\rho\Cextend \frak d}\in\orth{}{}$, then $\orth{\sem[\rho] f}{\spat p(\frak d)}$.
  \end{enumerate}

  We conclude that, for all $\rho\in\SemTy\Gamma$,
  \begin{enumerate}
  \item if $\DerF\Gamma {\XcolY {t^+} {M}}{}$ then $\SemTe{t^+}\rho\in\SemTyP {M}$; 
  \item if $\Der\Gamma {\XcolY {d} {\Delta}}{}$ then $\SemTe{d}\rho\in\SemTy {\Delta}$; 
  \item if $\Der\Gamma {{t} {}}{}$ then $\SemTe{t}\rho\in\orth{}{}$.
  \end{enumerate}
\end{lemma}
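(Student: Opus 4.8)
The plan is to prove the three statements \emph{simultaneously}, by induction on the given typing derivation, with the context $\Gamma$ and the semantic context $\rho\in\SemTy\Gamma$ generalised (they both change in the \textsf{async} case). The \textbf{Well-foundedness} hypothesis is in fact discharged \emph{before} the induction even starts: it is exactly what makes the three interpretations $\SemTyP M$, $\SemTyN M$ and $\SemTy\Delta$ well-defined, since they are defined by recursion on the ordering $\MolesIneq$ (every $\Drefute{M'}$ occurring as a leaf of a decomposition $\Delta$ of $M$ satisfies $M'\MolesIneq M$). Once the interpretations exist as functions, adequacy is a property of a fixed derivation, and in each rule of Fig.~\ref{def:LAFqf} the premises whose interpretation I invoke are strict subderivations of the conclusion; hence ordinary well-founded induction on derivation trees applies, notwithstanding the fact that the \textsf{async} rule has a whole family of premises indexed by $(p,\Delta)$.

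The cases that merely unfold definitions come first. For \textsf{sync} ($t^+=pd$, with $\DerDec{}{\Delta}{\XcolY p M}$ and subderivation $\Der\Gamma{\XcolY d\Delta}$), induction hypothesis~(2) gives $\sem[\rho]d\in\SemTy\Delta$, and since $\sem[\rho]{pd}=\spat p(\sem[\rho]d)$, the very definition of $\SemTyP M$ places $\sem[\rho]{pd}$ in $\SemTyP M$. The rules \textsf{init}, unit and pairing are dispatched by unfolding $\sem[\rho]{x^+}=\varRead[x^+]\rho$ (against $\rho\in\SemTy\Gamma$ and $\varRead[x^+]\Gamma=a$), $\sem[\rho]\Tunit=\Dunit$, and $\sem[\rho]{d_1\Tand d_2}=\sem[\rho]{d_1}\Dand\sem[\rho]{d_2}$ (using induction hypothesis~(2) on the two subderivations). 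For the command rules \textsf{select} and \cut, the pivot is the definition $\SemTyN M=\{\mathfrak n\mid\forall\mathfrak p\in\SemTyP M,\ \orth{\mathfrak n}{\mathfrak p}\}$. In \textsf{select}, induction hypothesis~(1) on $\DerF\Gamma{\XcolY{t^+}{\varRead[x^-]\Gamma}}{}$ gives $\sem[\rho]{t^+}\in\SemTyP{\varRead[x^-]\Gamma}$, while $\rho\in\SemTy\Gamma$ gives $\varRead[x^-]\rho\in\SemTyN{\varRead[x^-]\Gamma}$; applying the latter to the former yields $\orth{\varRead[x^-]\rho}{\sem[\rho]{t^+}}$, i.e. $\sem[\rho]{\cutc{x^-}{t^+}}\in\orth{}{}$. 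In \cut, induction hypothesis~(2) on $\Der\Gamma{\XcolY f{\Drefute M}}$ gives $\sem[\rho]{\THO f}\in\SemTy{\Drefute M}$, whence the underlying negative denotation lies in $\SemTyN M$; applied to $\sem[\rho]{t^+}\in\SemTyP M$ (induction hypothesis~(1)) this gives $\sem[\rho]{\cutc{f}{t^+}}\in\orth{}{}$.

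The heart of the argument, and the single place where the remaining two hypotheses are consumed, is \textsf{async} ($d=\THO f$, target $\Drefute M$). I must show $\sem[\rho]{\THO f}\in\SemTy{\Drefute M}$, that is, that the negative denotation $\sem[\rho]{\THO f}$ lies in $\SemTyN M$, that is, that $\orth{\sem[\rho]{\THO f}}{\mathfrak p}$ holds for \emph{every} $\mathfrak p\in\SemTyP M$ — and here the universal quantification packed into the orthogonal $\SemTyN M$ mirrors precisely the universal quantification in the rule. Indeed an arbitrary such $\mathfrak p$ is of the form $\spat p(\mathfrak d)$ with $\mathfrak d\in\SemTy\Delta$ and $\DerDec{}{\Delta}{\XcolY p M}$; the latter is exactly the side condition under which \textsf{async} supplies the subderivation $\Der{\Gamma\Cextend\Delta}{f(p)}$. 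By \textbf{Typing correlation}, $\rho\Cextend\mathfrak d\in\SemTy{\Gamma\Textend{\Delta}}$ (from $\rho\in\SemTy\Gamma$ and $\mathfrak d\in\SemTy\Delta$); induction hypothesis~(3) on that subderivation then gives $\sem[\rho\Cextend\mathfrak d]{f(p)}\in\orth{}{}$; finally \textbf{Stability} converts this into $\orth{\sem[\rho]{\THO f}}{\spat p(\mathfrak d)}$, as required. Since $\mathfrak p$ was arbitrary, $\sem[\rho]{\THO f}\in\SemTyN M$, and so $\sem[\rho]{\THO f}\in\SemTy{\Drefute M}$.

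I expect the genuine difficulties to be bookkeeping rather than conceptual. First, one must stay disciplined about the coercion $\Drefute{}\colon\SNeg\to\SDecs$: the algebra delivers $\sem[\rho]{\THO f}\in\SNeg$, which is what appears as the first component of the command clause $\sem[\rho]{\cutc{f}{t^+}}=(\sem[\rho]{\THO f},\sem[\rho]{t^+})$, whereas as a decomposition term $\THO f$ is interpreted by its image $\Drefute{(\sem[\rho]{\THO f})}\in\SDecs$, so membership in $\SemTy{\Drefute M}$ has to be read through the equivalence ``$\Drefute{\mathfrak n}\in\SemTy{\Drefute M}$ precisely when $\mathfrak n\in\SemTyN M$''. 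Second, the lookups $\varRead[x^+]\rho$ and $\varRead[x^-]\rho$ used in \textsf{init} and \textsf{select} must be defined, which relies on the semantic contexts in $\SemTy\Gamma$ carrying the same labels as $\Gamma$ (guaranteed by the way the $(\SPrim,\SNeg)$-context algebra $\SContexts$ indexes its denotations, in parallel with $\TContexts$). Beyond these two points the proof is the expected simultaneous induction, with \textsf{async} as the one case that exhibits the advertised correspondence between Zeilberger's quantification over decompositions and the quantification hidden in the orthogonality construction.
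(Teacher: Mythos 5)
Your proof is correct and is, in essence, the proof the paper delegates to its Coq formalisation: a mutual induction over the three typing judgements (with $\Gamma$ and $\rho$ generalised), where well-foundedness is consumed beforehand in making the interpretations well-defined, and typing correlation plus stability are consumed exactly once, in the \textsf{async} case, which realises the advertised match between the rule's quantification over decompositions and the quantification in the orthogonal. The paper gives no in-text argument (its proof is a citation to the Coq development), but your account matches the intended one, including the two bookkeeping points --- the coercion $\Drefute{}$ into semantic decompositions and the agreement of label domains between $\rho$ and $\Gamma$ --- that the formalisation must, and does, handle.
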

\begin{proof}See the proof in \Coq~\cite{LengrandHDRCoq}.\end{proof}

Looking at the Adequacy Lemma, the stability condition is traditional:
it is the generalisation, to that level of abstraction, of the usual
condition on the orthogonality relation in orthogonality models (those
realisability models that are defined in terms of orthogonality,
usually to model classical
proofs~\cite{girard-ll,DanosKrivine00,Krivine01,MunchCSL09}):
orthogonality is ``closed under anti-reduction''. Here, we have not
defined a notion of reduction on \LAF\ proof-terms, but
intuitively, we would expect to rewrite
$\cutc{\THO f}{pd}$ to $f(p)$ ``substituted by $d$''.

On the other hand, the typing correlation property is new, and is due
to the level of abstraction we operate at: there is no reason why our
data structure for typing contexts would relate to our data structure
for semantic contexts, and the extension operation, in both of them,
has so far been completely unspecified. Hence, we clearly need such an
assumption to relate the two.

However, one may wonder when and why the typing correlation property
should be satisfied. One may anticipate how typing correlation could
hold for the instance \LAF[K1] of
\LAF. Definition~\ref{def:classicalcontext} suggests that, in the
definition of a typing context algebra, the extension operation does
not depend on the nature of the atom $a$ or molecule $M$ that is being
added to the context. So we could \emph{parametrically} define
$(\mathcal A,\mathcal B)$-contexts for any sets $\mathcal A$ and
$\mathcal B$ (in the sense of relational
parametricity~\cite{Rey:typapp}). The typing context algebra would be
the instance where $\mathcal A=\Atms$ and $\mathcal B=\Moles$ and the
semantic context algebra would be the instance where $\mathcal
A=\SPrim$ and $\mathcal B=\SNeg$. Parametricity of context extension
would then provide the typing correlation property.

\section{Example: boolean models to prove Consistency}
\label{sec:RealConsistency}

We now exhibit models to prove the consistency of \LAF\ systems.




We call \Index{boolean realisability algebra} a realisability algebra
where $\orth\ \ =\emptyset$. The terminology comes from the fact
that in such a realisability algebra, $\SemTyN{M}$ can only
take one of two values: $\emptyset$ or $\SNeg$, depending on whether
$\SemTyP{M}$ is empty. A boolean realisability algebra
satisfies Stability.

\begin{theorem}[Consistency of \LAF\ instances]
  Assume we have a well-founded \LAF\ instance, and a boolean
  realisability algebra for it, where typing correlation holds and there is an empty semantic context $\rho_\emptyset$.
  There is no empty typing context $\Gamma_\emptyset$ and command $t$ such that
  $\Der{\Gamma_\emptyset} {{t} {}}{}$.
\end{theorem}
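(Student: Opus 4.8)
The plan is to reduce everything to conclusion~(3) of the Adequacy Lemma (Lemma~\ref{lem:adequacy}) and then exploit that the orthogonality relation is empty. Suppose, towards a contradiction, that there are an empty typing context $\Gamma_\emptyset$ and a command $t$ with $\Der{\Gamma_\emptyset}{t}$. First I would check that the three hypotheses of Lemma~\ref{lem:adequacy} are available: Well-foundedness and Typing correlation are part of the theorem's assumptions, and Stability holds because the algebra is boolean, as recorded in the paragraph introducing boolean realisability algebras (once $\orth{}{}=\emptyset$, the Stability condition is vacuously true). Hence the Adequacy Lemma applies in this setting.

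Next I would verify that the empty semantic context $\rho_\emptyset$ belongs to $\SemTy{\Gamma_\emptyset}$. Unfolding the interpretation of a typing context, $\rho_\emptyset\in\SemTy{\Gamma_\emptyset}$ amounts to requiring, for every $x^+\in\domP{\rho_\emptyset}$, that $\varRead[x^+]{\rho_\emptyset}\in\SemTy{\varRead[x^+]{\Gamma_\emptyset}}$, and, for every $x^-\in\domN{\rho_\emptyset}$, that $\varRead[x^-]{\rho_\emptyset}\in\SemTyN{\varRead[x^-]{\Gamma_\emptyset}}$. Since $\rho_\emptyset$ is empty, both $\domP{\rho_\emptyset}$ and $\domN{\rho_\emptyset}$ are empty, so these two universally quantified conditions hold vacuously and $\rho_\emptyset\in\SemTy{\Gamma_\emptyset}$. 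Note that this step consumes only the emptiness of $\rho_\emptyset$, which is guaranteed by hypothesis; the emptiness of $\Gamma_\emptyset$ is the natural formulation of consistency but is not itself needed here.

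Finally, I would instantiate conclusion~(3) of the Adequacy Lemma with $\Gamma:=\Gamma_\emptyset$ and $\rho:=\rho_\emptyset$: from $\Der{\Gamma_\emptyset}{t}$ together with $\rho_\emptyset\in\SemTy{\Gamma_\emptyset}$ we obtain $\SemTe{t}{\rho_\emptyset}\in\orth{}{}$. But in a boolean realisability algebra $\orth{}{}=\emptyset$, so this asserts $\SemTe{t}{\rho_\emptyset}\in\emptyset$, which is impossible. This contradiction rules out the existence of such $\Gamma_\emptyset$ and $t$, establishing consistency. I do not expect any genuine obstacle: all the real work has been front-loaded into the Adequacy Lemma, and the only points requiring care are confirming that the boolean assumption discharges Stability and performing the vacuous-membership check $\rho_\emptyset\in\SemTy{\Gamma_\emptyset}$.
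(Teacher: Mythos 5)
Your proposal is correct and follows exactly the paper's proof: discharge Stability via the boolean assumption ($\orth{}{}=\emptyset$ makes the condition vacuous), observe that $\rho_\emptyset\in\SemTy{\Gamma_\emptyset}$ holds vacuously by emptiness, and then invoke conclusion~(3) of the Adequacy Lemma to obtain $\SemTe{t}{\rho_\emptyset}\in\emptyset$, a contradiction. Your version merely spells out the vacuity arguments that the paper leaves implicit.
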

\begin{proof}
  The previous remark provides Stability. If there was such a
  $\Gamma_\emptyset$ and $t$, then we would have
  $\rho_\emptyset\in\SemTy{\Gamma_\emptyset}$, and the Adequacy Lemma
  (Lemma~\ref{lem:adequacy}) would conclude $\SemTe
  t{{\rho_\emptyset}}\in\emptyset$.
\end{proof}

We provide such a realisability model that works with all ``parametric'' \LAF\ instances:

\begin{definition}[Trivial model for parametric \LAF\ instances]

  Assume we have a parametric \LAF\ instance, \ie an instance where
  the typing context algebra $\TContexts$ is the instance
  $\Contexts[\Atms,\Moles]$ of a family of context algebras
  $(\Contexts[\mathcal A,\mathcal B])_{\mathcal A,\mathcal B}$ whose
  notion of extension is defined parametrically in $\mathcal
  A,\mathcal B$.
  The \Index{trivial boolean model} for it is:
  \columns{
    \column[.49]{
      \[
      \begin{array}{rc@{\,}l}
        \SPrim\eqdef\quad\SPos\eqdef&\SNeg&\eqdef\unitt\\
        &\orth{}{}&\eqdef \emptyset\\
        \forall \frak{\Delta}\in\SDecs,&\spat p(\frak{\Delta})&\eqdef \uniti\\
        \forall f:\Data\pfunspace\PTerms, \forall \rho\in\SContexts,& \sem[\rho] f&\eqdef \uniti\\
        \forall a\in\Atms,
        &\sem {a}
        &\eqdef\unitt
      \end{array}
      \]
    }
    \cquad
    \cquad
    \cquad
    \cquad
    \cquad
    \hfill
    \column{
      \[
      \begin{array}{llll}
        \SContexts\eqdef \Contexts[\unitt,\unitt]\\
        \mbox{and therefore}\\
        \begin{array}{llll}
          \forall\rho\in\SContexts,\forall x^+\in\domP\rho,& \varRead[x^+]\rho&\eqdef \uniti\\
          \forall\rho\in\SContexts,\forall x^-\in\domN\rho,& \varRead[x^-]\rho&\eqdef \uniti\\
        \end{array}
      \end{array}
      \]
    }
  }
\end{definition}

Note that, not only can $\SemTyN{M}$ only take one of the two
values $\emptyset$ or $\unitt$, but $\SemTyP{M}$ can also only
take one of the two values $\emptyset$ or $\unitt$.

We can now use such a structure to derive consistency of parametric \LAF\ instances:

\begin{corollary}[Consistency for parametric \LAF\ instances]\label{cor:consistency}
Assume we have a parametric \LAF\ instance that is well-founded and assume there is an empty $(\unitt,\unitt)$-context in $\Contexts[\unitt,\unitt]$.
Then there is no empty typing context $\Gamma_\emptyset$ and command $t$ such that
$\Der{\Gamma_\emptyset} {{t} {}}{}$.

In particular, this is the case for \LAF[K1].
\end{corollary}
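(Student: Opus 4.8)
The plan is to obtain the corollary by instantiating the general Consistency theorem for \LAF\ instances (stated above) with the trivial boolean model just defined. That theorem requires a well-founded \LAF\ instance together with a boolean realisability algebra that satisfies typing correlation and possesses an empty semantic context. Well-foundedness is part of our hypotheses. The trivial model is boolean by construction, since $\orth{}{}=\emptyset$, so by the remark preceding the theorem Stability holds for free. Its semantic context algebra is $\SContexts=\Contexts[\unitt,\unitt]$, so the empty $(\unitt,\unitt)$-context that we assume to exist in $\Contexts[\unitt,\unitt]$ directly supplies the required $\rho_\emptyset$. Hence the only genuine verification left before applying the theorem is typing correlation for this particular model.

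I would establish typing correlation, \ie that $\rho\in\SemTy\Gamma$ and $\frak \Delta\in\SemTy{\Delta}$ imply $(\rho\Cextend\frak\Delta)\in\SemTy{\Gamma\Textend{\Delta}}$, by exploiting that the instance is \emph{parametric}: both $\Contexts[\Atms,\Moles]$ and $\Contexts[\unitt,\unitt]$ arise from a single family $(\Contexts[\mathcal A,\mathcal B])_{\mathcal A,\mathcal B}$ whose extension is defined uniformly in $\mathcal A,\mathcal B$. The two unique maps $\Atms\rightarrow\unitt$ and $\Moles\rightarrow\unitt$ induce on decompositions the structure projection $\Delta\mapsto\abs\Delta$, and they generate a logical relation between $(\Atms,\Moles)$-contexts and $(\unitt,\unitt)$-contexts. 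Relational parametricity of the extension operation then guarantees that, whenever $\Gamma$ is related to $\rho$ and $\Delta$ is related to $\frak\Delta$, the extended context $\Gamma\Textend{\Delta}$ is related to $\rho\Cextend\frak\Delta$; concretely, the two extended contexts have matching domains, and the molecule read off each negative label of $\Gamma\Textend{\Delta}$ is exactly the one carried either by the corresponding label of $\Gamma$ or by the corresponding $\Drefute M$ leaf of $\Delta$. In the boolean model $\uniti\in\SemTyN{M}$ holds \iff\ $\SemTyP{M}=\emptyset$, so the membership condition reduces to emptiness of positive interpretations: the positive constraints hold trivially because $\sem a=\unitt$ is all of $\SPrim$, the negative constraints inherited from the old labels come from $\rho\in\SemTy\Gamma$, and those for the new labels come from $\frak\Delta\in\SemTy{\Delta}$ (which forces $\SemTyN{M}\neq\emptyset$, \ie $\SemTyP{M}=\emptyset$, at each $\Drefute M$ leaf). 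This yields $(\rho\Cextend\frak\Delta)\in\SemTy{\Gamma\Textend{\Delta}}$.

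With typing correlation in hand, the Consistency theorem applies verbatim and delivers the first assertion. For the last sentence I would check that \LAF[K1] meets all the hypotheses: Definition~\ref{def:classicalcontext} specifies context extension purely through the shape of the decomposition being added and never inspects the particular atom $a$ or molecule $M$, so it defines a context algebra uniformly in its two carrier sets and is therefore parametric; the instance is well-founded because $M'\MolesIneq M$ entails that $M'$ is a subformula of $M$, as already observed; and $\Contexts[\unitt,\unitt]$ contains an empty context. Specialising the corollary to \LAF[K1] then concludes.

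The main obstacle is the parametricity step, which is the one new ingredient at this level of abstraction. Turning the informal slogan ``parametricity of context extension'' into a usable lemma means fixing a precise relational interpretation of the family $(\Contexts[\mathcal A,\mathcal B])$ and proving that extension respects it; once that free theorem is available, transferring label denotations across the projection $\Delta\mapsto\abs\Delta$ and reducing everything to the emptiness conditions above is routine. In the \Coq\ formalisation this is exactly the part that has to be discharged by hand, since no built-in parametricity is available.
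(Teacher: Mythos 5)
Your proposal is correct and follows essentially the same route as the paper: instantiate the general Consistency Theorem with the trivial boolean model, where booleanness yields Stability, the assumed empty $(\unitt,\unitt)$-context supplies $\rho_\emptyset$, and relational parametricity of the context-extension family supplies typing correlation --- precisely the step the paper singles out in the remark following the Adequacy Lemma and discharges by hand in the \Coq\ formalisation. Your elaboration of that step, via the logical relation induced by the projections of $\Atms$ and $\Moles$ to $\unitt$, is exactly the intended content of that remark.
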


The system \LAF[J] does not fall in the above category since the
operation of context extension is not parametric enough: when
computing $\Gamma\Textend{a}$ (\resp $\Gamma\Textend{\Drefute{M}}$),
we have to make a case analysis on whether $a$ is of the form
$(l^+,\righths)$ or $(v,\lefths)$ (\resp whether $M$ is of the form
$(N,\lefths)$ or $(P,\righths)$).

But we can easily adapt the above trivial model into a
not-as-trivial-but-almost model for \LAF[J], as is shown
in~\cite{LengrandHDR}, Ch.~6.

\section{Conclusion and Further Work}
\label{Conclusion}

\subsection{Contributions}

In this paper we have used, and slightly tweaked, a system with
proof-terms proposed by Zeilberger for \emph{big-step
  focussing}~\cite{ZeilbergerPOPL08}, which abstracts away from the
idiosyncrasies of logical connectives and (to some extent) logical
systems: In particular we have shown how two focussed sequent calculi
of the literature, namely \LKF\ and \LJF~\cite{liang09tcs}, are
captured as instances of this abstract focussing system \LAF.

Building on Munch-Maccagnoni's description~\cite{MunchCSL09} of
classical realisability in the context of polarised classical logic
and focussing, we have then presented the realisability models for
\LAF, thus providing a generic approach to the denotational semantics
of focussed proofs. Central to this is the Adequacy
Lemma~\ref{lem:adequacy}, which connects typing and realisability, and
our approach is generic in that the Adequacy Lemma is proved once and
for all, holding for all focussed systems that can be captured as
\LAF\ instances.

Incidently, a by-product of this paper is that we provided proof-term
assigments for \LKF\ and \LJF, and provided realisability semantics
for their proofs. We believe this to be new.

But showing the Adequacy Lemma at this level of abstraction was also
motivated by the will to exhibit how close typing and realisability
are while differing in an essential way:

\subsection{Typing vs. realisability}

Concerning the \emph{positives}, typing and realisability simply mimic
each other: Ignoring contexts,
\begin{itemize}
  \item in typing, $\DerF{} {\XcolY {t^+} M}{}$ means $t^+$ is
of the form $pd$ with $\DerDec{}{\Delta}{\XcolY p{M}}$ and
$\Der{}{\XcolY d{\Delta}}$ for some $\Delta$;
\item
in realisability,
$\sem{t^+}\in\SemTyP {M}$ means $\sem{t^+}$ is of the form $\spat
p(\frak d)$ with $\DerDec{}{\Delta}{\XcolY p{M}}$ and ${\frak
  d}\in\Delta$ for some $\Delta$.
\end{itemize}

Concerning the \emph{negatives}, it is appealing to relate the
quantification in rule \textsf{async} with the quantification in the
orthogonality construction:
\begin{itemize}
  \item in typing, ${\Der{} {\XcolY {\THO f} {\Drefute M}}}$ means that for
all $p$ and $\Delta$ such that $\DerDec{}{\Delta}{\XcolY p{M}}$, we
have $\Der{\Cextend[x]\Delta}{f(p)}$\linebreak ($\Delta$ extending the empty
typing context);
\item
in realisability, $\sem f\in\SemTyN {M}$ means that for all
$p$ and $\Delta$ such that $\DerDec{}{\Delta}{\XcolY p{M}}$, for all
$\frak \Delta\in\SemTy{\Delta}$ we have $\orth{\sem f}{\spat p (\frak
  \Delta)}$, usually obtained from $\sem[\Cextend\frak \Delta]
{f(p)}\in\orth{}{}{}$ ($\frak\Delta$ extending the empty
semantic context).
\end{itemize}

In both cases, a ``contradiction'' needs to be reached for all $p$ and
$\Delta$ decomposing $M$. But in typing, the proof-term $f(p)$
witnessing the contradiction can only depend on the pattern $p$ and
must treat its holes (whose types are agregated in $\Delta$)
parametrically, while in realisability, the reason why $\orth{\sem
  f}{\spat p (\frak\Delta)}$ holds, though it may rely on the
computation of $f(p)$, can differ for every $\frak
\Delta\in\SemTy{\Delta}$.  It is the same difference that lies between
the usual rule for $\forall$ in a Natural Deduction system for
arithmetic, and the $\omega$-rule~\cite{hilbert31,schuette50}:
\[
\infers{\forall n A}[\forall\mbox{-intro}]{A}\qquad \infers{\forall n A}[\omega]{\subst A n 0\quad \subst A n 1\quad\subst A n 2\quad\ldots}
\]
The difference explains why typing is (usually) decidable, while
realisability is not, and contributes to the idea that ``typing is but
a decidable approximation of realisability''.

We believe that we have brought typing and realisability as close as
they could get, emphasising where they coincide and where they differ,
in a framework stripped from the idiosyncrasies of logics, of their
connectives, and of the implementation of those functions we use for
witnessing refutations, \ie inhabiting negatives (\eg the $\lambda$ of
$\lambda$-calculus).

\subsection{Coq formalisation and additional results}

In this paper we have also exhibited simple realisability models to
prove the consistency of \LAF\ instances.

The parameterised \LAF\ system has been formalised in
Coq~\cite{LengrandHDRCoq}, together with realisability algebras. The
Adequacy Lemma~\ref{lem:adequacy} and the Consistency result
(Corollary~\ref{cor:consistency}) are proved there as well.
Because of the abstract level of the formalism, very few concrete
structures are used, only one for $(\mathcal A,\mathcal
B)$-decompositions and one for proof-terms; rather, Coq's
\emph{records} are used to formalise the algebras used throughout the
paper, declaring type fields for the algebras' support sets, term
fields for operations, and proof fields for specifications.  Coercions
between records (and a few structures declared to be canonical) are
used to lighten the proof scripts.

Besides this, the Coq formalisation presents no major surprises. It
contributes to the corpus and promotion of machine-checked theories
and proofs. However, formalising this in Coq was a particularly
enlightening process, directly driving the design and definitions of
the concepts. In getting to the essence of focussing and stripping the
systems from the idiosyncrasies of logics and of their connectives,
Coq was particularly useful: Developing the proofs led to identifying
the concepts (\eg what a typing context is), with their basic
operations and their minimal specifications.  Definitions and
hypotheses (\eg the three hypotheses of the Adequacy Lemma) were
systematically weakened to the minimal form that would let proofs go
through. Lemma statements were identified so as to exactly fill-in the
gaps of inductive proofs, and no more.

The formalisation was actually done for a \emph{first-order} version
of \LAF, that is fully described in~\cite{LengrandHDR}. That in itself
forms a proper extension of Zeilberger's
systems~\cite{ZeilbergerPOPL08}. In this paper though we chose to
stick to the propositional fragment to simplify the presentation.

Regarding realisability models, more interesting examples than those
given here to prove consistency can obviously be built to illustrate
this kind of semantics.
In particular, realisability models built from the term syntax can be
used to prove normalisation properties of \LAF, as shown
in~\cite{LengrandHDR}.  Indeed, one of the appeals of big-step
focussing systems is an elegant notion of cut-reduction, based on
rewrite rules on proof-terms and with a functional programming
interpretation in terms of \emph{pattern-matching}.
A cut-reduction system at the abstraction level of \LAF\ is given
in~\cite{LengrandHDR}, in terms of an abstract machine (performing
head-reduction). A command $t$ is evaluated in an evaluation context
$\rho$; denoting such a pair as $\machine[\rho]t$, we have the main
reduction rule (where $d'$ stands for the evaluation of $d$ in the evaluation context $\rho$):
\[\machine[\rho]{\cutc{\THO f}{pd}}\Rew{}\machine[\rho\Cextend d']{f(p)}\]

Normalisation of this system (for a well-founded \LAF\ instance), is
proved by building a syntactic realisability model, in which
orthogonality holds when the interaction between a negative denotation
and a positive one is normalising. This model, together with the head
normalisation result, are also formalised in
Coq~\cite{LengrandHDRCoq}.
It forms a formal connection, via the
orthogonality techniques, between proofs of normalisation \emph{\`a
  la} Tait-Girard and realisability.
From this termination result, an informal argument is
proposed~\cite{LengrandHDR} to infer cut-elimination, but the argument
still needs to be formalised in Coq. This is tricky since,
cut-elimination needing to be performed arbitrarily deeply in a
proof-tree (``under lambdas''), we need to formalise a notion of
reduction on those functions we use for witnessing refutations, for
which we have no syntax.

Finally, more work needs to be done on formalising the connections
between \LAF\ and other related systems: firstly, \LAF\ is very
strongly related to \emph{ludics}~\cite{Girard01}, a system for
big-step focussing for linear logic, and which is also related to game
semantics. \LAF\ can be seen as a non-linear variant of ludics, our
proof-term-syntax more-or-less corresponding to ludics'
\emph{designs}. But in order to get linearity, \LAF\ would need to
force it in the typing rules for the decomposition terms $x^+$,
$\Tunit$, and $d_1\Tand d_2$. It would also be interesting to
investigate whether or how \LAF\ could be adapted to modal logics.

\bibliographystyle{good}
\bibliography{Common/abbrev-short,Common/Main,Common/crossrefs}

\begin{thebibliography}{GL14b}
\providecommand\homedir{\textasciitilde}
\providecommand\emphasizeformat[1]{\emph{#1}}
\providecommand\authorformat[1]{{#1}}
\providecommand\monthdisplay[1]{\unskip}
\providecommand{\url}[1]{\texttt{#1}}
\providecommand{\href}[2]{\texttt{#2}}
\providecommand{\nolinkurl}[1]{\texttt{#1}}
\providecommand{\urlalt}[2]{\href{#1}{\nolinkurl{#2}}}
\providecommand{\doi}[1]{DOI:\urlalt{http://dx.doi.org/#1}{#1}}

\bibitem[And92]{andreoli92focusing}
\authorformat{J.~M. Andreoli}.
\newblock Logic programming with focusing proofs in linear logic.
\newblock \emphasizeformat{J. Logic Comput.}, 2(3):297--347, 1992.
\newblock \doi{10.1093/logcom/2.3.297}%
\parfillskip=0pt\penalty7000\null\hfill\null

\bibitem[DJS95]{DJS95}
\authorformat{V.~Danos, J.-B. Joinet, and H.~Schellinx}.
\newblock {\uppercase{\sf LKQ}} and {\uppercase{\sf lkt}}: sequent calculi for
  second order logic based upon dual linear decompositions of classical
  implication.
\newblock In J.-Y. Girard, Y.~Lafont, and L.~Regnier, editors,
  \emphasizeformat{Proc.{} of the Work.{} on Advances in Linear Logic}, volume
  222 of \emphasizeformat{London Math. Soc. Lecture Note Ser.}, pages 211--224.
  Cambridge University Press, 1995.%
\parfillskip=0pt\penalty7000\null\hfill\null

\bibitem[DK00]{DanosKrivine00}
\authorformat{V.~Danos and J.-L. Krivine}.
\newblock Disjunctive tautologies as synchronisation schemes.
\newblock In P.~Clote and H.~Schwichtenberg, editors, \emphasizeformat{Proc.{}
  of the 9th Annual Conf.{} of the European Association for Computer Science
  Logic (CSL'00)}, volume 1862 of \emphasizeformat{LNCS}, pages 292--301.
  Springer-Verlag, \monthdisplay{August} 2000.
\newblock \doi{10.1007/3-540-44622-2_19}%
\parfillskip=0pt\penalty7000\null\hfill\null

\bibitem[DL07]{DL:JLC07}
\authorformat{R.~Dyckhoff and S.~Lengrand}.
\newblock Call-by-value $\lambda$-calculus and {LJQ}.
\newblock \emphasizeformat{J. Logic Comput.}, 17:1109--1134, 2007.
\newblock \doi{10.1093/logcom/exm037}%
\parfillskip=0pt\penalty7000\null\hfill\null

\bibitem[Gir87]{girard-ll}
\authorformat{J.-Y. Girard}.
\newblock Linear logic.
\newblock \emphasizeformat{Theoret. Comput. Sci.}, 50(1):1--101, 1987.
\newblock \doi{10.1016/0304-3975(87)90045-4}%
\parfillskip=0pt\penalty7000\null\hfill\null

\bibitem[Gir91]{Gir:newclc}
\authorformat{J.-Y. Girard}.
\newblock A new constructive logic: Classical logic.
\newblock \emphasizeformat{Math. Structures in Comput. Sci.}, 1(3):255--296,
  1991.
\newblock \doi{10.1017/S0960129500001328}%
\parfillskip=0pt\penalty7000\null\hfill\null

\bibitem[Gir01]{Girard01}
\authorformat{J.~Girard}.
\newblock Locus solum: From the rules of logic to the logic of rules.
\newblock \emphasizeformat{Mathematical Structures in Computer Science},
  11(3):301--506, 2001.
\newblock \doi{10.1017/S096012950100336X}%
\parfillskip=0pt\penalty7000\null\hfill\null

\bibitem[GL14a]{LengrandHDR}
\authorformat{S.~Graham-Lengrand}.
\newblock \emphasizeformat{Polarities \& Focussing: a journey from
  Realisability to Automated Reasoning}.
\newblock Habilitation thesis, Universit{\'e} {Paris-Sud}, 2014.
\newblock Available at \url{http://hal.archives-ouvertes.fr/tel-01094980}%
\parfillskip=0pt\penalty7000\null\hfill\null

\bibitem[GL14b]{LengrandHDRCoq}
\authorformat{S.~Graham-Lengrand}.
\newblock Polarities \& focussing: a journey from realisability to automated
  reasoning -- {Coq} proofs of {Part II}, 2014.
\newblock \url{http://www.lix.polytechnique.fr/\homedir lengrand/Work/HDR/}%
\parfillskip=0pt\penalty7000\null\hfill\null

\bibitem[Hil31]{hilbert31}
\authorformat{D.~Hilbert}.
\newblock {Die Grundlegung der elementaren Zahlenlehre}.
\newblock \emphasizeformat{Mathematische Annalen}, 104:485--494, 1931.%
\parfillskip=0pt\penalty7000\null\hfill\null

\bibitem[Kle45]{KleeneSC:intint}
\authorformat{S.~Kleene}.
\newblock On the interpretation of intuitionistic number theory.
\newblock \emphasizeformat{J. of Symbolic Logic}, 10:109--124, 1945.
\newblock \doi{10.2307/2269016}%
\parfillskip=0pt\penalty7000\null\hfill\null

\bibitem[Kri01]{Krivine01}
\authorformat{J.-L. Krivine}.
\newblock Typed lambda-calculus in classical {Zermelo}-{Fr{\ae}nkel} set
  theory.
\newblock \emphasizeformat{Arch. Math. Log.}, 40(3):189--205, 2001.
\newblock \doi{10.1007/s001530000057}%
\parfillskip=0pt\penalty7000\null\hfill\null

\bibitem[LM09]{liang09tcs}
\authorformat{C.~Liang and D.~Miller}.
\newblock Focusing and polarization in linear, intuitionistic, and classical
  logics.
\newblock \emphasizeformat{Theoret. Comput. Sci.}, 410(46):4747--4768, 2009.
\newblock \doi{10.1016/j.tcs.2009.07.041}%
\parfillskip=0pt\penalty7000\null\hfill\null

\bibitem[MM09]{MunchCSL09}
\authorformat{G.~Munch-Maccagnoni}.
\newblock Focalisation and classical realisability.
\newblock In E.~Gr{\"a}del and R.~Kahle, editors, \emphasizeformat{Proc.{} of
  the 18th Annual Conf.{} of the European Association for Computer Science
  Logic (CSL'09)}, volume 5771 of \emphasizeformat{LNCS}, pages 409--423.
  Springer-Verlag, \monthdisplay{September} 2009.
\newblock \doi{10.1007/978-3-642-04027-6_30}%
\parfillskip=0pt\penalty7000\null\hfill\null

\bibitem[Rey83]{Rey:typapp}
\authorformat{J.~C. Reynolds}.
\newblock Types, abstraction and parametric polymorphism.
\newblock In R.~E.~A. Mason, editor, \emphasizeformat{Proc.{} of the IFIP 9th
  World Computer Congress - Information Processing}, pages 513--523.
  North-Holland, 1983.%
\parfillskip=0pt\penalty7000\null\hfill\null

\bibitem[Sch50]{schuette50}
\authorformat{K.~Sch\"utte}.
\newblock {Beweistheoretische Erfassung der unendlichen Induktion in der
  Zahlentheorie}.
\newblock \emphasizeformat{Mathematische Annalen}, 122:369--389, 1950.%
\parfillskip=0pt\penalty7000\null\hfill\null

\bibitem[Zei08a]{ZeilbergerPOPL08}
\authorformat{N.~Zeilberger}.
\newblock Focusing and higher-order abstract syntax.
\newblock In G.~C. Necula and P.~Wadler, editors, \emphasizeformat{Proc.{} of
  the 35th Annual ACM Symp.{} on Principles of Programming Languages
  (POPL'08)}, pages 359--369. ACM Press, \monthdisplay{January} 2008.
\newblock \doi{10.1145/1328438.1328482}%
\parfillskip=0pt\penalty7000\null\hfill\null

\bibitem[Zei08b]{Zeilberger08}
\authorformat{N.~Zeilberger}.
\newblock On the unity of duality.
\newblock \emphasizeformat{Ann. Pure Appl. Logic}, 153(1-3):66--96, 2008.
\newblock \doi{10.1016/j.apal.2008.01.001}%
\parfillskip=0pt\penalty7000\null\hfill\null

\end{thebibliography}

\end{document}